\newtheorem{prop}{Proposition}
\definecolor{dkgreen}{rgb}{0,0.42,0}
\definecolor{gray}{rgb}{0.5,0.5,0.5}
\definecolor{mauve}{rgb}{0.58,0,0.82}
\tiny\color{gray},
\title{The Cloud Needs a Reputation System}
\author
{
Murad Kablan$^\dagger$
\and Carlee Joe-Won$^\mp$
\and Sangtae Ha$^\dagger$
\and Hani Jamjoom$^\ddagger$ 
\and Eric Keller$^\dagger$
\end{tabular}\newline\begin{tabular}{c c c c c}
        \affaddr{$^\dagger$University of Colorado} & ~~~~~ & \affaddr{$^\ddagger$IBM Watson Research Center} & ~~~~~ & \affaddr{$^\mp$Princeton University}\\
       \affaddr{Boulder, CO, USA} & ~~~~~ & \affaddr{Yorktown Heights, NY, USA} & ~~~~~ & \affaddr{Princeton, NJ, USA}\\
}
\newcommand{\nip}[1]{\vspace{1ex}\noindent\textbf{#1}}
\begin{document}
\sloppy

\maketitle
\thispagestyle{empty}

\renewcommand\_{\textunderscore\allowbreak}
\newcommand{\paragraphb}[1]{\vspace{0.03in}\noindent{\bf #1} }
\newcommand{\paragraphe}[1]{\vspace{0.03in}\noindent{\em #1} }
\newcommand{\paragraphbe}[1]{\vspace{0.03in}\noindent{\bf \em #1} }
\newcommand{\comment}{\textcolor{red}}
\newcommand{\eric}{\textcolor{red}}
\newcommand{\murad}{\textcolor{blue}}

\newcommand{\eg}{{\em e.g., }}
\newcommand{\ie}{{\em i.e., }}

\begin{abstract}

Today's cloud apps are built from many diverse services that are
managed by different parties.  At the same time, these parties, which
consume and/or provide services, continue to rely on arcane static
security and entitlements models.  In this paper, we introduce {\em
  Seit}, an inter-tenant framework that manages the interactions
between cloud services. Seit is a software-defined reputation-based
framework. It consists of two primary components: (1) a set of
integration and query interfaces that can be easily integrated into
cloud and service providers' management stacks, and (2) a controller
that maintains reputation information using a mechanism that is
adaptive to the highly dynamic environment of the cloud.  We have
fully implemented Seit, and integrated it into an SDN controller, a load
balancer, a cloud service broker, an intrusion detection system, and a
monitoring framework.  We evaluate the efficacy of Seit using both an
analytical model and a Mininet-based emulated environment. Our
analytical model validate the isolation and stability properties of
Seit.  Using our emulated environment, we show that Seit can provide
improved security by isolating malicious tenants, reduced costs by
adapting the infrastructure without compromising security, and
increased revenues for high quality service providers by enabling
reputation to impact discovery.

\end{abstract}

\section{Introduction}

%Multi-tenancy and elasticity are probably the two key properties that
%uniquely define cloud computing.  These properties have opened up new
%business models (such as infrastructure-as-a-service models), and
%enabled new infrastructure management that is more dynamic (such as
%scaling a web service based on load by launching more virtual
%machines).  No doubt, cloud computing has been a great enabler for
%innovative web services.

Building and deploying any distributed ``app'' today is radically different from a
decade ago. Where traditional applications of the past required
dedicated infrastructure and middleware stacks, today's apps not only
run on shared---cloud---infrastructure, they rely on many services,
residing within and outside of underlying cloud. An app, for example,
can use Facebook for authentication, Box for storage, Twilio for
messaging, Square for payments, Google AdSense for advertising, etc. This
trend of deploying and consuming services (often referred to as the
{\em mesh economy}) can be seen by the rapid growth of cloud platforms which integrate services (and not just compute) like
Amazon Web Services~\cite{aws}, Microsoft Azure~\cite{windowsazure}, Heroku~\cite{heroku}, IBM
BlueMix~\cite{bluemix}, and CloudFoundry~\cite{cloudfoundry}, to name a
few. More importantly, these emerging platforms further encourage the
construction of apps from even smaller---micro---services (\eg GrapheneDB,
Redis, 3scale, etc.), where such services are developed and managed by
different tenants.

%Of course, it has not thwarted attacks  bad thing.
%(cite cloud attacks from within).

Despite the shift in how applications are being disaggregated, 
% new environment with a very different architecture than traditional enterprise infrastructures, 
the management of security between services and tenants remains
largely the same: {\em one focused on a perimeter defense with a largely
static security configuration}.  This is exacerbated by the isolation
mechanisms being provided by cloud providers. It is also reinforced 
by the research community, which has also focused on
technologies that ensure isolation \cite{Shieh2011seawall, Guo2010secondnet, Popa2012faircloud, Varadarajan2014scheduler}.
%, and the security community
%demonstrating when isolation is not upheld (cite side channel paper).
While isolation enables tenants to reason about
their perimeter, perimeter defense is widely considered
insufficient~\cite{Zhang2012crossvm, Ristenpart2009heyyou, Zhang2014paasside, Wool2004, Bartal2004, Kaplan2014, thomson2014darkreading}.
Attackers are largely
indistinguishable from innocent parties and can
rely on their relative anonymity to bypass the perimeter.
Furthermore, erecting virtual perimeters in the cloud wastes an
opportunity for efficiency optimizations available in a multi-tenant
cloud infrastructure, especially since inter-tenant communication has
been shown to be an important component in intra-cloud data center
traffic~\cite{Ballani2013}. Such intra-cloud traffic will, of course, only grow as
apps move onto Platform as a Service (PaaS) clouds like CloudFoundry,
Heroku, and IBM BlueMix.
%and we
%believe is poised to grow even more (exemplified with platforms such
%as CloudFoundry (cite) and running agents in the cloud (cite put.io)).

%http://www.scmagazine.com/73-percent-of-survey-respondent-say-infosec-needs-have-changed/article/383231/

The insufficient nature of static security configuration has received
some attention from the research community in the form of highly
programmable network infrastructures such as software-defined
networking (SDN)~\cite{casado2007ethane, openflow} and network function virtualization
(NFV) or software-defined middlebox infrastructure~\cite{NFV,Tennenhouse97asurvey, simple,
comb, xomb, mbox_models, Gember-Jacobson2014opennf}.  To date, existing research has largely focused
on the systems to enable a dynamic security infrastructure, but leave
the automated use of these newly programmable infrastructures as an
open topic.

% Do we mention cloud providers today have relatively ad-hoc/limited
% mechanisms.

% Sticking with the same mindset (e.g., putting up a perimeter) (i)
% wastes an opportunity for efficiency optimizations, (ii) prevents The
% cloud should be an enabler for collaboration -- this is backed by
% data that suggests inter-tenant communication is commonplace, and we
% believe will grow.

% Motivation/trends against perimeter defense with largely static security configuration: 
%-- new environments for collab (PaaS)
%-- more inter-tenant communication in IaaS (Microsoft).  -- a programmable, dynamic network infrastructure for increased efficiency, etc. (IaaS)
%-- agents in the cloud (https://put.io/)
%-- (this is more about dynamic interactions) a more dynamic user infrastructure (which applies to private cloud), and more dynamic tenant interaction 

In this paper, we argue that the cloud needs a reputation system.
Reputation systems leverage the existence of many, collaborating
parties to automatically form opinions about each other.  A cloud can
leverage the power of the crowd---the many interacting tenants---to
achieve three primary goals: (1) focus isolation-related resources on
tenants that more likely cause problems, (2) optimize communication
between behaving parties, and (3) enable a security posture that
automatically adapts to the dynamicity of tenants and services
entering and leaving a cloud infrastructure.

%In
%a cloud, we can leverage the power of the crowd (the many tenants
%interacting), differentiate how tenants are handled to focus resources
%on those more likely to cause problems, and optimize communication for
%those behaving properly, and enable the security posture to entirely
%automatically adapt to the dynamically changing infrastructure.

%Leveraging the power of the crowd not only allows one to identify and
%isolate a ``misbehaving tenant,'' 

In addition to the above three goals, we believe a reputation-based
system can encourage a culture of self-policing.  A report from
Verizon~\cite{verizon} shows that a large majority of compromises are
detected by a third party, not the infected party itself, as the
infected software starts interacting with external services.  In
service-centric clouds, being able to monitor sentiment through the
reputation system will allow a good intentioned tenant to know
something is wrong (e.g., when its reputation drops).  This is a
missing feature in traditional, security-centric infrastructures that
are based on isolation from others.

We introduce Seit\footnote{Seit means reputation or renown in the
  Arabic language.}, a general reputation-based framework. Seit
defines simple, yet generic interfaces that can be easily integrated
into different cloud management stacks. These interfaces interact with
a reputation manager that maintains tenant reputations and governs
introductions between different tenants. Together, the interfaces and
reputation manager enable reputation-based service differentiation in a way that
maintains stable reputations at each tenant and separates misbehaving
tenants from well-behaved tenants. 
Specifically, this paper makes the following contributions: 

\begin{itemize}

\item An \emph{architecture} that represents a practical realization
  of an existing reputation mechanism that is resilient to common
  attacks on reputation systems and adapted to support the operating
  environment of the cloud's dynamically changing behavior.  We
  optimize that mechanism with a query mechanism that supports the
  abstraction of a continuous query being performed.

\vspace{-1ex}
\item The demonstration of the feasibility with a \emph{prototype}
  implementation and \emph{integration} of Seit across a number of
  popular cloud and network components, including:
  Floodlight~\cite{floodlight} (an SDN controller),
  CloudFoundry~\cite{cloudfoundry} (a Platform as a Service cloud),
  HAProxy~\cite{haproxy} (a load balancer), Snort~\cite{snort} (an
  intrusion detection system), and Nagios~\cite{nagios} (a monitoring
  system).

\vspace{-1ex}
\item A proof, through an \emph{analytical model}, that the system is
  able to effectively isolate bad tenants, and that the system can
  remain stable despite the high dynamics.

\vspace{-1ex}
\item The demonstration of the effectiveness of Seit and the cloud
  components we integrated through an \emph{evaluation} that shows (1)
  an effective ability to protect tenants from attackers by
  propagating information about malicious behavior by way of
  reputation updates, (2) an effectiveness in reducing costs of
  security middlebox infrastructure without compromising security, and
  (3) the incentives to provide good service in a PaaS cloud where
  users have information about service reputation in selecting a
  provider.

\end{itemize}

The remainder of the paper is organized as follows. In
Section~\ref{sec:related} we survey related
work. Section~\ref{sec:motivation} highlights how a reputation-based
system can benefit different cloud environments, and also
identifies key design challenges. Section~\ref{sec:seit} describes the
architecture of Seit. Section~\ref{sec:impl} provides Seit's implementation details. 
We analyze the isolation and stability of
properties of Seit in Section~\ref{sec:properties}. We
evaluate Seit's efficacy in Section~\ref{sec:evaluation}. The paper
concludes in Section~\ref{sec:conclusion}.

\section{Related Work}
\label{sec:related}

Seit builds on past research in cloud systems and reputation systems.
Here, we discuss these works.

\nip{Cloud Systems for Inter-tenant Communication.}  Communication
within clouds has largely focused on isolation mechanisms between
tenants.  A few systems focused explicitly on handling inter-tenant
communication.  In particular, CloudPolice~\cite{Popa2010cloudpolice}
implements a hypervisor-based access control mechanism, and
Hadrian~\cite{Ballani2013} proposed a new network sharing framework
which revisited the guarantees given to tenants.  Seit is largely
orthogonal to these.  Most closely related to Seit is
Jobber~\cite{jobber}, which proposed using reputation systems in the
cloud.  With Seit, we provide a practical implementation, demonstrate
stability and isolation, integrate with real cloud components, and
provide a full evaluation.

\nip{Reputation for Determining Communication.}  Leveraging reputation
has been explored in many areas to create robust, trust-worthy
systems. For example, Introduction-Based Routing (IBR) creates
incentives to separate misbehaving network participants by leveraging
implicit trust relationships and per-node
discretion~\cite{Frazier2011}.  With Seit, we leverage IBR in a
practical implementation for a cloud system and extend it to support more
dynamic use.  Ostra~\cite{mislove2008ostra} studied the use of trust
relationships among users, which already exist in many
applications (namely, social networks), to thwart unwanted communication. Ostra's credit scheme
ensures the overall credit balance unchanged at any times so that
malicious, colluding users can pass credits only between themselves,
protecting against sybils. SybilGuard~\cite{Yu2008SybilGuard} also
uses social networks to identify a user with multiple identities.  At
a high level, Ostra and Seit have some similarity, namely in the
objective of thwarting unwanted communication.  The biggest difference
is that Ostra was designed for person-to-person communication (email,
social networks, etc.), whereas Seit is designed for
machine-to-machine communication.  This then impacts how communication is handled.
In Ostra, communication is either wanted or unwanted; unwanted
communication is blocked. Seit goes beyond simply being able to
block certain users and allows for a variety of responses.  Further,
being machine-to-machine as opposed to being person-to-person impacts
how feedback is handled. In Ostra, feedback was explicit based on
human input, whereas Seit integrates with a variety of systems that
provide implicit feedback of varying strength.

% OSTRA person to person, but shared other features
% OSTRA -- just block
%Need ability to mark something as `bad', but in M2M, it is not necessarily binary (good or bad).
% OSTRA -- user did this and it was binary

\nip{Reputation for Peer-to-peer Systems.}  
%Finally, reputation
%systems appear in peer-to-peer systems.  
In peer-to-peer systems,
reputation-based approaches involving other participants are used for
better decisions about cooperation such as preventing free-riders and
inauthentic file pieces in the system. For example,
EigenTrust~\cite{Kamvar2003eigentrust} aims to filter inauthentic file
pieces by maintaining a unique global reputation score of each peer
based on the peer's history of
uploads. Dandelion~\cite{Sirivianos2007dandelion},
PledgeRouter~\cite{Landa2009Sybil},
FairTorrent~\cite{Sherman2012fairtorrent}, and one hop reputation
system~\cite{Piatek2008Onehop} aim at minimizing the overhead of
maintaining reputation scores across peers either by placing a central
trusted server or by limiting the scope of reputation calculations.
Seit is targeting an environment where there is not a direct
give-and-take relationship; as such, we leverage a richer
reputation mechanism that maintains a graph of all user interactions
and uses it to determine local reputation for each user.

\section{Reputation Matters}
\label{sec:motivation}

In this section, we elaborate on the potential benefits provided by a
cloud reputation-based system, covering four different
systems. Using these examples, we then identify the key design
challenges in building a cloud reputation-based system.

\subsection{Motivational Examples}
\label{sec:examples}

\nip{Reputation-augmented SDN Controller.} An IaaS cloud provides tenants with an
ability to dynamically launch and terminate virtual machines (VMs). The
provider's network is also becoming highly programmable using
SDN approaches.
%
%The provider's 
%network is highly programmable, using software-defined networking approaches (whether
%using OpenFlow or not) to support this dynamic environment.  
%The providers provide a limited API for tenants to control the
%underlying network, and for the most part, must infer tenant behavior (e.g., traffic patterns) 
%beyond that.  
%
SDN policies, in general, use {\em explicit} rules for
managing flows (blocking some, allowing some, or directing some
through a sequence of (security) middleboxes~\cite{casado2007ethane}).
%Achieving this without a programmable network, while technically possible,
%is tedious and ripe with overhead (e.g., configuring end-host firewalls).  
%
A reputation-based system would extend the SDN interfaces to enable
flow control using {\em implicit} rules.  Instead of a tenant
needing to specify, for example, specific flows to block, it could
specify that flows originating from sources with low reputation scores
should be blocked.
% (and by the infrastructure provider doing it, allows
%the traffic to be blocked at the source, rather than by the
%tenant). 
This is illustrated in Figure~\ref{fig:ex_iaas}(a), where
Tenant $T2$ is blocking the traffic from $T1$.  Likewise, the tenant
can specify a set of middlebox traversal paths that then get tied to a
given reputation score. This is illustrated in
Figure~\ref{fig:ex_iaas}(b), where Tenant $T3$ views $T1$ as good, so
gives it a direct path; Tenant $T3$ also views $T2$ as suspect, so
forces its traffic through a middlebox.  
%Again, this allows the
%infrastructure provider to optimize the substrate network, while
%providing a simple and powerful interface to the tenant.

\begin{figure}
\centering
\includegraphics[width=\columnwidth]{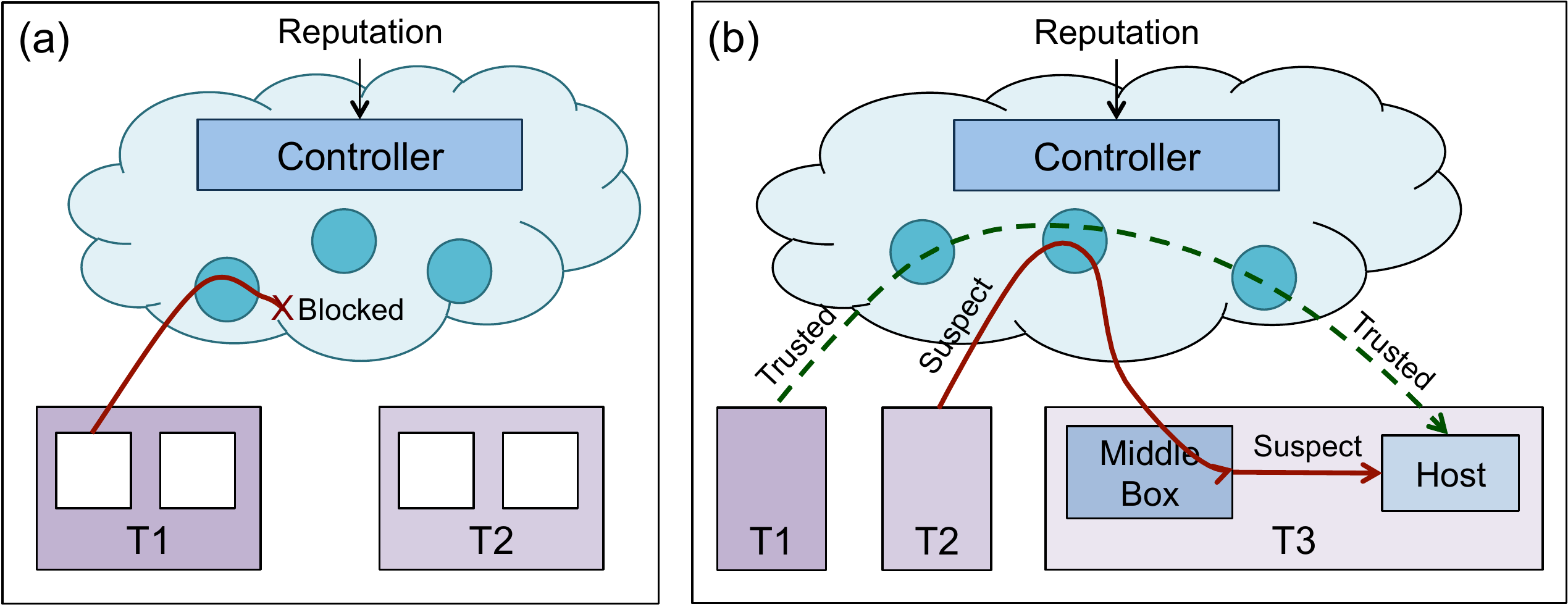}
\caption{Example of an IaaS provider using a reputation-based system}
\label{fig:ex_iaas}
\vspace{-0.1in}
\end{figure}

\nip{Reputation-augmented PaaS Brokers.} PaaS clouds offer the ability
to compose applications using services provided by the platform.  In
some cases, a PaaS cloud provides all of the services (\eg
Microsoft Azure~\cite{windowsazure}).  In other cases, platforms, such as
CloudFoundry~\cite{cloudfoundry}, provide an environment where many
service providers can offer their services.

As illustrated in Figure~\ref{fig:ex_paas}, service consumers use a
{\em broker} to (1) discover services and (2) bind to them. Service
discovery, in general, implements a simple search capability, focusing
on returning one-to-one match with the needed service (\eg version 2.6
of MongoDB).  With a reputation-based system, service discovery can be
enriched to include many new metrics that capture the quality of
service as perceived by other users. So, if two different providers
offer version 2.6 of MongoDB, then consumers can reason about which
service offer better quality.

In a similar way, a reputation-based system can be useful for service providers during the
binding phase, since it maintains historical information on consumers
of the service. In CloudFoundry, for example, the service provider
(\eg MongoDB) is responsible to implementing multi-tenancy. Most data
stores do not create separate containers (or VMs) per tenant; they
simply create different table spaces for each tenant. With a reputation-based system,
service providers can implement different tenant isolation
primitives. An untrusted tenant is provisioned a separate container
and is charged more because of the additional computing resources that
are required.

\begin{figure}
\centering
\includegraphics[width=0.9\columnwidth]{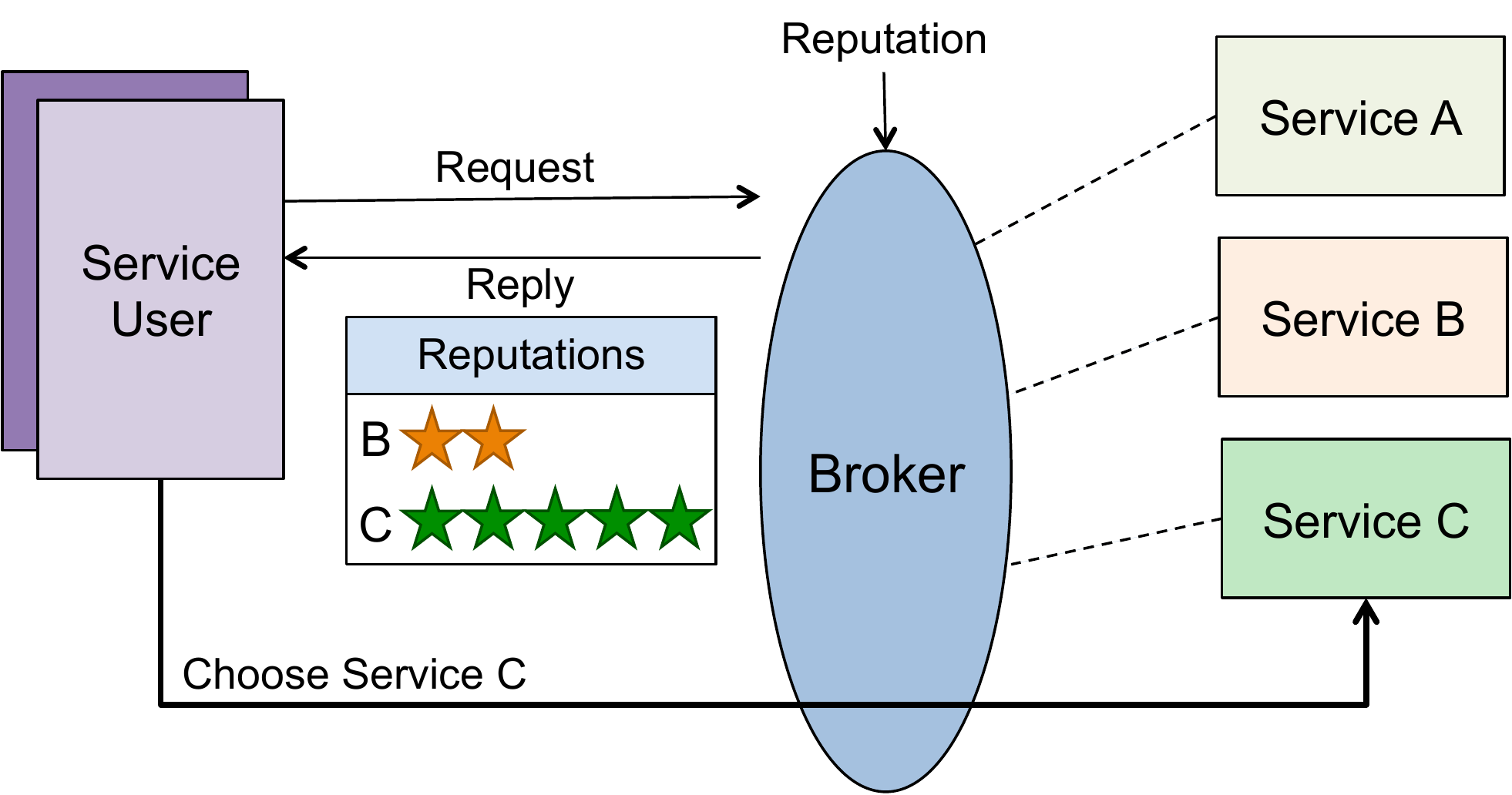}
\caption{Example of a PaaS provider using a reputation-based system}
\label{fig:ex_paas}
\vspace{-0.1in}
\end{figure}

\nip{Reputation-based Load Balancing.} Tenants and services can
directly implement a reputation-based system without explicit support from the cloud
providers. The resulting system in this setup would resemble a
peer-to-peer reputation-based system, where reputation is used to
provide differentiated services.

Figure~\ref{fig:ex_tenant} illustrates the integration of a reputation-based system into a
web service, where load balancers are used to distribute
client load across identical instances of the service.  Typically,
load balancers aim for even distribution~\cite{nginx}. 
%(while, of course, supporting
%sticky sessions to ensure subsequent requests from the same client go
%to the same server)~\cite{nginx}.
With a reputation-based system, the web service can differentiate its users based on their
reputation, directing good/trusted clients to a set of
servers, and bad/untrusted clients to a different set of servers.

%We believe that services would benefit from reputation-based load
%balancing for many reasons. An obvious one is offering better
%protection against data leakage.  Consider the same service above that
%uses server-side encryption\footnote{encrypting user data before
%  storing, then decrypting on the server to process it} and makes
%heavy use of caching for performance reasons.  In this case, a web
%server that becomes compromised may have a great deal of user data
%that can be leaked.  If we separate the clients based on
%trustworthiness, those that are well behaved, get placed together and
%enjoy the benefit of reduced likelihood of the server being attacked.

\nip{Sentiment-based Self-policing.} In traditional 
infrastructures, the administrator has great
visibility of what is happening inside of the infrastructure through a
variety of monitoring tools.  The administrator, however, has limited
visibility into how the infrastructure is viewed externally.  
Major outages are obvious and are easy to detect. Other
types of issues might, at best, result in an email (or other reporting
mechanism) being sent to the administrator.  
Sentiment analysis is widely used in corporations (\eg monitor Twitter feeds
to observe whether there is any positive or negative chatter affecting
its brand~\cite{henshen2012, o2010tweets}).  With a reputation-based system,
a service can monitor its sentiment as perceived by its consumers.  By
monitoring one's sentiment, the tenant can determine whether others
are having a negative experience interacting with it, then trigger a
root cause analysis.  This is supported by a report from
Verizon~\cite{verizon} which says that in many cases, infiltrations
are largely detected by external parties, not by the infected party
itself.

\subsection{Design Challenges}
\label{sec:challenges}

Reputation systems have been used in peer-to-peer systems to prevent
leachers~\cite{qiu2004modeling, Kamvar2003eigentrust,
Sirivianos2007dandelion, Sherman2012fairtorrent}, and in person-to-person communication
systems to prevent unwanted communication~\cite{mislove2008ostra}.  Applying to
the cloud has the unique challenges in being machine-to-machine,
highly variable, and highly dynamic interactions. In this subsection, we identify
five design challenges when building a cloud-based reputation-based
system.
 
\nip{Integration.} Cloud components and services come in all shapes and
sizes. Integrating a reputation-based system do not require
substantial development efforts. Similar to service life-cycle
calls in PaaS clouds~\cite{cloudfoundry}, a reputation-based system must define
simple, yet generic interfaces that can be easily implemented by
service providers. More importantly, the interfaces should support
configurable query language that provides support for efficient
continuous query and feedback.

\nip{Interpretation of a Reputation.}  Depending on the service,
reputations can be interpreted in many ways. Here, there is challenge
in defining what constitutes good or bad, and doing it automatically
(\ie a human will not be explicitly marking something as good or bad).
Even more, in machine-to-machine communication, an interaction is not
necessarily binary (good or bad), as there is a wide range of possible
interactions.

\begin{figure}[t]
\centering
\includegraphics[width=0.8\columnwidth]{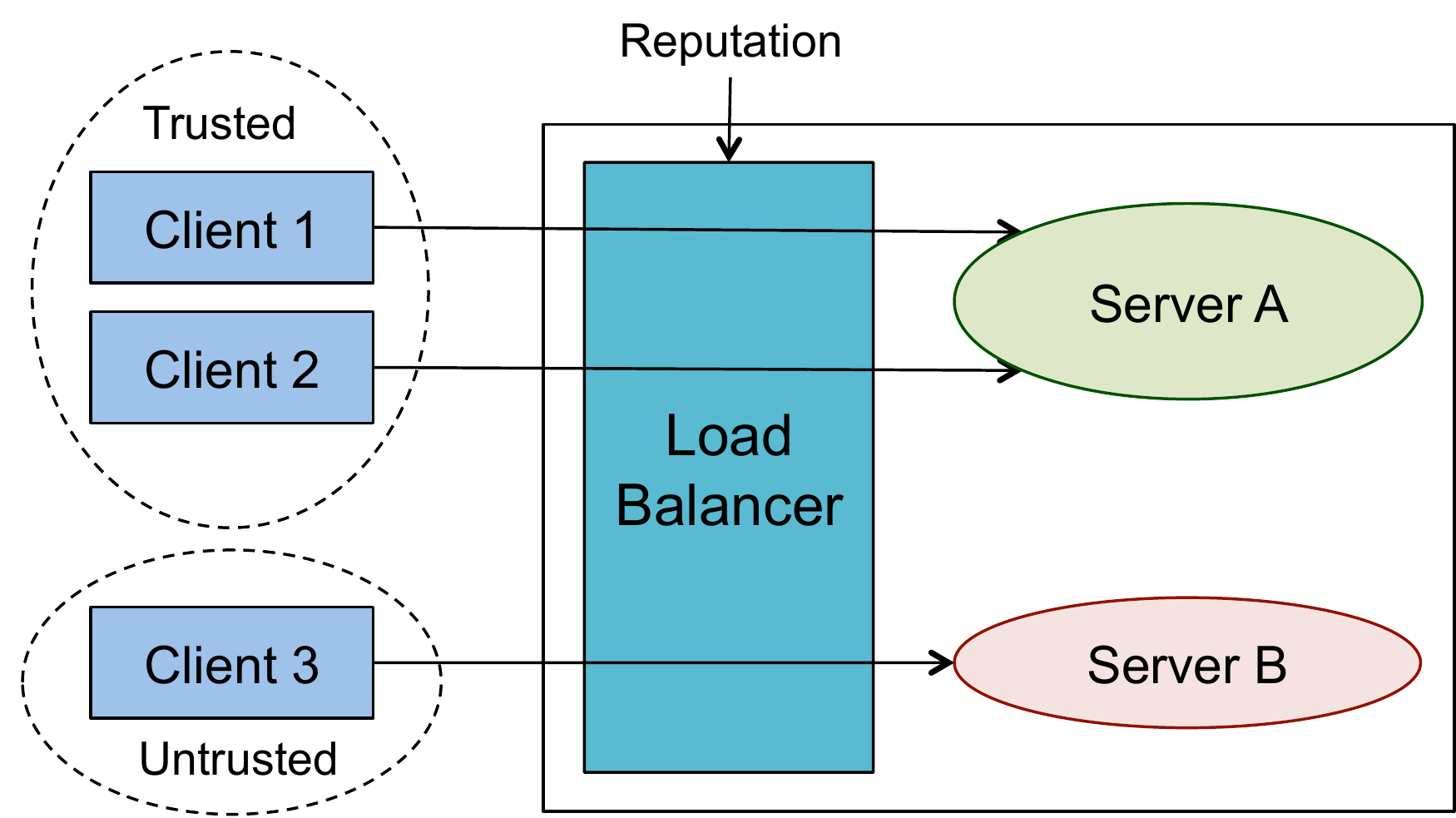}
\caption{Example of a tenant using a reputation-based system}
\label{fig:ex_tenant}
\vspace{-0.1in}
\end{figure}

\nip{Isolation.} In a human-centric reputation-based system (\eg Stack
Overflow~\cite{stack-overflow}), a global user reputation is
desirable. In contrast, the cloud consists of a wide variety of
systems. The reputation mechanism must be effective in clustering
tenants into groups (\eg to isolate bad tenants) based on both local
(tenant) view and global view.

\nip{Stability.} 
The ability to isolate bad tenants prevents system oscillations
as tenants adjust their reputations: instead, misbehaving tenants
will converge to a low reputation, and other tenants to a higher reputation.
Moreover, the reputation mechanism should be stable to short-term
fluctuations in behavior. For instance, if a tenant accidentally
misbehaves for a short time before resuming its normal behavior, it
should be able to eventually recover its reputation instead of being
immediately and permanently blacklisted by other tenants.

%A reputation mechanism should be stable to short-term
%fluctuations in behavior. For instance, if a tenant accidentally
%misbehaves for a short time before resuming its normal behavior, it
%should be able to eventually recover its reputation instead of being
%immediately and permanently blacklisted by other tenants.  It should
%also not be subject to oscillations due to interactions between
%tenants automatically adjusting reputation and behavior. 
%
%We consider this property further in Section \ref{sec:properties}.
%A reputation mechanism must be stable, while
%still being dynamically (and automatically) adjustable.  Stability can
%be compromised if, for example, one tenant adjusts the reputation (say
%negatively), which causes another tenant to alter how it is handling
%traffic, which then has a side effect of improving things for the
%tenant that adjusted the reputation, so it adjusts it again (say
%positively), which then causes the other tenant to alter how it is
%handling traffic, which has the side effect of hurting things the for
%the tenant that adjusted the reputation, and an oscillation is formed.

\nip{Resiliency.} 
Finally, a reputation mechanism must be resilient to attacks of the
reputation mechanism itself.  In particular, an attacker
falsely manages to build up a good reputation before launching an
attack by, for example, sybils, or other
tenants controlled by it that effectively say good things about the
attacker.

\section{Seit}
\label{sec:seit}

Seit was designed with the above challenges in mind. 
Figure~\ref{fig:arch} shows an overview of Seit's architecture.  Seit
includes a collection of interfaces that are specific to the
individual components and parties within the cloud.  Seit also consists of
a centralized\footnote{We capitalize on the centralized nature of the
  cloud for this implementation, but can envision a decentralized
  implementation which is less efficient, but equivalent in
  functionality.} reputation manager that interfaces with both the
cloud provider(s) and each tenant. In this section, we introduce these two
main components and describe how they address the above challenges.

\begin{figure}[t]
\centering
\includegraphics[width=0.75\columnwidth]{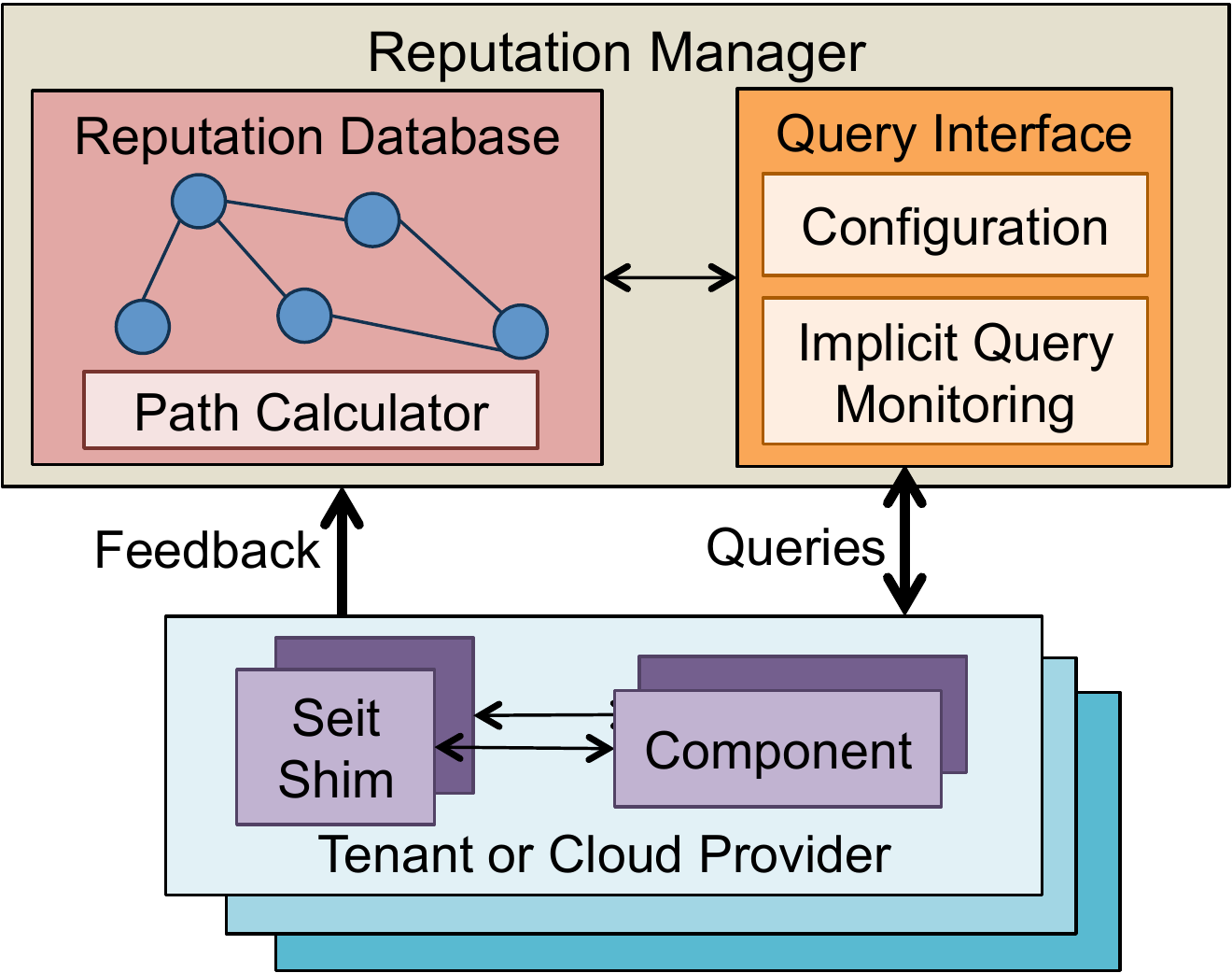}
\caption{Seit's architecture}
\label{fig:arch}
%\vspace{-0.16in}
\end{figure}

\subsection{Integration Interfaces}
\label{sec:interfaces}

To make Seit fully integratable into cloud systems, we need interfaces
between user components (\eg firewalls, load balancers, network
controllers, etc.) and the reputation manager.
Seit includes a framework to create a shim around existing components.
As shown in Figure~\ref{fig:shim}, 
%externally, the shim is providing the
%same interface as the existing component, where that interface can
%include a data interface as well as a configuration interface.
%
Seit shims extend a component's existing interface with (potentially)
two additional interfaces that interact with Seit's reputation
manager.  The two additional interfaces represent two subcomponents:
inbound and outbound logic.  The shim's inbound logic
interprets how incoming reputation updates should impact the execution
of the component. The outbound logic translates how alerts,
events, and status updates from the component should impact the
reputation that is sent back to the reputation manager.
%Each component will have different events; each of
%the events will have different severity in terms of how the reputation
%should be impacted.  
%Users can configure either the absolute weighting
%of each individual event interpretation, or relative weighting of the
%defaults to enable different components to have different importance
%within an overall network.  
%
%  Each
%component will have different capabilities, and the use of reputation
%for each ``action'', will differ based on component.  Users can
%configure the default or absolute weightings to reflect the overall
%network policy and architecture.
%
%% NOTE: BIG THING WE'RE MISSING IS HOW TO CONFIGURE ALL OF THIS
%
%All of this logic is going to be specific to a given component (of
%course, reuse is possible for overlapping logic).  
Both inbound and outbound logics are component specific; some
components only implement one of these two logics.  Here, we provide a few
examples to clarify the design and interface of shims:

\begin{itemize}

\item \textbf{SDN Controller in IaaS Clouds:} The IaaS network
  controller is what manages the physical cloud network.  We assume
  that the network controller has an interface to set up a logical
  topology (\eg place a firewall in between the external network and
  local network), and an interface to block traffic.\footnote{Even
    though former does not fully exist yet, we believe it will as the
    API becomes richer and as research in the space progresses.}
  %The network controller will include the
  %functionality to block traffic and set up paths through the network
  %to interconnect VMs.  
  The shim's inbound logic will extend these capabilities to
  make use of reputations.  For example, if reputation is less than 0,
  block; between 0 and 0.8, direct to a security middlebox; greater
  than 0.8, provide a direct connection.

\item \textbf{PaaS Broker:} The PaaS broker's responsibility is to
  effectively serve as a discovery mechanism for services in the
  cloud.  With Seit, we can extend it to enrich and filter the
  results.  Whenever a search request arrives at the broker, the
  shim's outbound logic would interpose on the request and queries
  the reputation manager to get the reputations of the searched services; the
  inbound logic would then sort and filter the results based on
  user-defined criteria. For example, it may be configured to filter
  out any services that would have less than a 0.3 reputation for a
  given user, and sort the remaining results.

\begin{figure}
\centering
\includegraphics[width=0.95\columnwidth]{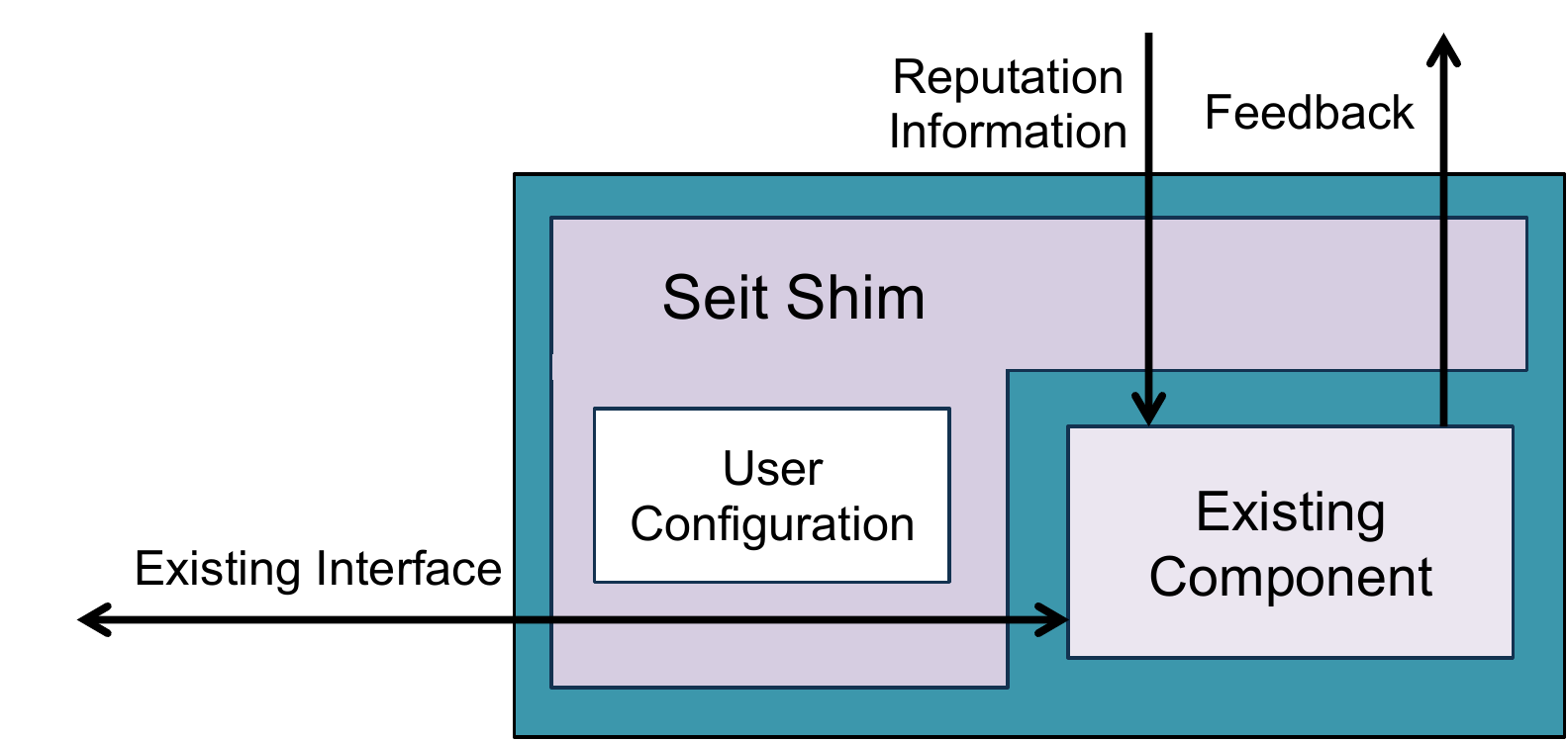}
\caption{Seit shim generic interface.}
\label{fig:shim}
%\vspace{-0.16in}
\end{figure}

%\item \textbf{Virus detection:} Virus detection, whether on the end
%  host, or in the network, can scan traffic (downloads) to determine
%  if a known virus signature is present.  A positive identification of
%  a virus is a strong indication that the sending party has an issue
%  and interaction with them should proceed with great caution.  In
%  this case, a larger negative feedback should be given.
%  Here, the shim outbound logic would monitor the virus detectors
%  alert/log and upon seeing a virus, send negative feedback.  The shim
%  does not alter the existing data interface (the packets).

\item \textbf{Load Balancer:} As mentioned earlier, a
  reputation-augmented load balancer can be used to provide
  differentiated services to trusted and untrusted users. Here, the
  shim's inbound logic assigns new connections to servers based on the
  tenant's reputation score. 

%\item \textbf{Firewall:} A firewall blocks unwanted packets according
  %to its configuration.  The shim's inbound logic would extend
  %existing firewall configuration with a call block a flow of traffic
  %if the reputation for the source is below some value.  If firewalls
  %provide logging or alerts (\eg when receiving a connection request
  %to a blocked port), the shim's outbound logic can be used to send
  %negative feedback to the reputation manager. 

\item \textbf{Infrastructure Monitor:} Infrastructure monitoring tools
  present information about the infrastructure to the
  administrators. Monitoring can be used as a way to alert
  administrators of changes in services reputations. This would be
  implemented in the shim's inbound logic. It can also be used to
  update the reputation of services based on monitoring information
  (\eg detecting port scans by a tenant). This would be implemented in
  the shim's outbound logic.

\item \textbf{Intrusion Detection System:} An intrusion detection
  system (IDS) monitors network traffic and looks for signatures
  within packets or performs behavioral analysis of the traffic to
  detect anomalies.  In this case, the shim's outbound logic is
  designed to intercept the alerts from the IDS, and allow users to
  configure the feedback weights for each alert type.  For example,
  the shim can decrease the tenant's reputation by 0.1 when seeing a
  connection drop alert.  Similarly, it can decrease the reputation by
  0.5 when seeing a port scan alert.

\end{itemize}

The above discussion presented only a few examples. We envision all
components being integrated with Seit to provide feedback.
%, no matter how minor the feedback is.
For simplicity, the above discussion also focused on negative feedback.
Positive feedback is an important aspect as well.  Positive feedback
might be time-based, packet-based, or connection-based.
For example, a web server might provide positive feedback when
it goes through an entire session with well-formed http requests.
An IDS might provide positive feedback for every megabyte of traffic
that does not trigger an alert.

\begin{figure*}[th]
\centering
\includegraphics[width=\textwidth]{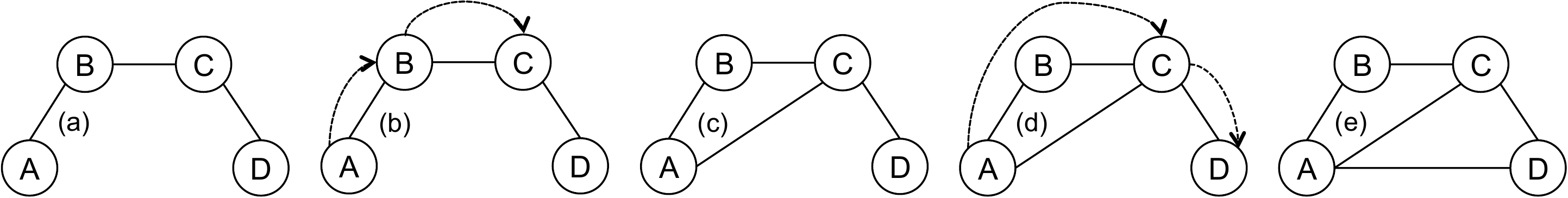}
\caption{Illustration of IBR: (a) Tenant $B$ is connected to Tenants $A$
  and $C$, and Tenant $D$ is connected to Tenant $C$, (b) $A$ asks $B$ to
  introduce it to $C$, and $B$ agrees to perform the introduction, (c) $C$
  accepts the introduction and starts communicating with $A$, (d) $A$ asks
  $C$ to introduce it to $D$, and $C$ agrees to perform this
  introduction, (e) $D$ accepts the introduction and starts
  communicating with $A$.}
\label{fig:IBR}
\end{figure*}

\subsection{Reputation Manager}
\label{sec:ibr}

%Seit reputation mechanism internally uses
%introduction-based routing to enable Seit to capitalize on the properties
%of IBR.  Externally, the reputation mechanism effectively provides
%an interface to perform a query on a reputation graph and alter the values of
%the edges -- abstracting away both the choice of reputation mechanism and
%providing a generic interface.  

The reputation manager is responsible for maintaining a view of the
reputations as perceived by various tenants.  The core of the
reputation manager is a reputation graph and means to query the
graph.

\subsubsection{Reputation Graph}

In Seit, reputation is modeled as a graph with nodes representing
tenants\footnote{In the future, we will explore reputations for a
  finer granularity than tenants.} and edges representing one node's
(tenant's) view of the other node (tenant) when the two have direct
communication with each other.  Here, we describe the mechanism for
building and using this reputation graph.

Seit adapts the introduction-based routing (IBR)
protocol~\cite{Frazier2011} used in P2P networks for tenant
interactions because of its ability to incorporate feedback in a
highly dynamic graph, and its resilience to sybil attacks.  IBR,
however, is not an intrinsic requirement of Seit. Seit requires only
that reputation scores are maintained for each tenant's view of other
tenants and uses these scores to determine the form of interaction
between them. We, thus, do not consider the full spectrum of
reputation system properties; considerations such as the possibility
of gaming the system are out of the scope of this paper.

%A second
%property of IBR is that it also introduces a mechanism where being on
%a path between X and Y, you are vouching for the next hop.  In doing
%so, it provides incentive to maintain an accurate reputation of the
%next hop.
 
\nip{Calculating Reputations.}
IBR in P2P networks allows peers to use participant feedback as a
basis for making their relationship choices.  While IBR can be
decentralized (as it was originally designed for P2P networks), in
centralizing the design, we internally use the IBR model, but
eliminate the signaling protocol between peers.  We give an example of
IBR's reputation calculation in Figure~\ref{fig:IBR}. The main idea is
that tenants can pass feedback to each other based on their
introductions to other tenants. Positive feedback results in a higher
reputation, and negative feedback in a lower reputation.

Nodes $A$, $B$, $C$ and $D$ in Figure~\ref{fig:IBR} are peers. The
straight lines indicate established connections. Each node maintains a
reputation score (trust level) for the nodes connected to it. When
Node $A$ wants to communicate with Node $D$, it must follow the chain
of connections between it and $D$ and ask the nodes in the chain for
introduction to the node after. Node $A$ starts asking $B$ for
introduction to $C$.  Node $B$ looks at node $A$'s behavior history
(represented by reputation score $AB$) from its local repository and
decides whether or not to forward $A$'s request to $C$. If the request
is forwarded, $C$ looks at behavior history of $B$ ($BC$) and decides
whether or not to accept the introduction request. The process
continues until $A$ reaches $D$. If $B$, $C$, or $D$ rejects a
request, node $A$ will not be able to communicate with $D$. After the
connection is established between $A$ and $C$, $C$ assigns a
reputations score to $A$ ($CA$) which is $x \times R(BC)$, where $x$
is a scaling parameter and $R(BC)$ is the reputation score of $B$ to
$A$.  Similarly, $D$ assigns reputations score of $DA$ which is $y
\times R(DC)$. If $A$ starts behaving negatively (\eg sending
malicious packets to $D$), $D$ will decrease $A$'s score $DA$ and also
decreases $C$'s score $DC$ since $C$ took the responsibility and
introduced $A$ to $D$. $C$ will do the same and decrease $A$'s score
$AC$ and $B$'s score since it introduced $A$ to $C$. Finally, $B$ will
decrease $A$'s reputation score $AB$. This approach ensures that nodes
are especially cautious about whom they introduce. Eventually,
misbehaving nodes will be isolated, with no other nodes will be
willing to introduce them when their scores fall under the minimum
trust level score. We will show this property in
Section~\ref{sec:properties}.

To reiterate, the IBR introduction mechanism is hidden from
tenants. They simply ask for a connection to another tenant and are
informed whether a path is available and accepted, and if so, what the
reputation score is.

\nip{Bootstrapping the Reputation Graph.}
When a tenant joins Seit's framework, it receives a default global
reputation score (assigned by the reputation manager) and zero local
reputation score until it begins to interacts with other tenants. Upon
being introduced to a tenant, the introduced tenant's initial
reputation score will be based on the introducer's score; it can then
evolve based on the reputation calculations that we describe next. 
In general, the speed with which a tenant builds a
reputation depends on several factors such as number of tenants it
interact with, the services it provides to these tenants, and any
privacy and security threats to these tenants. A new tenant does,
however, have an incentive to provide good services: its actions at
one tenant can propagate to others through introductions, influencing
its reputation at these other tenants. Since individual tenants do not
have a global view of these introduction relationships, these dynamics
also make it difficult for a malicious tenant to target a particular
victim, as it must first find an introduction chain to reach the
target.

%Tenants are giving the choice of whom to communicate with and
%can start connecting to each other by asking other tenants for
%introductions. Since tenants own and can provide more than one
%service, they are responsible for their services' behaviors; any
%misbehavior will affect the tenant's overall reputation. Positive and
%negative feedback from other tenants can respectively increase or
%decrease reputation scores.
%
%Though we could make Seit's reputation system completely
%decentralized, we make our IBR implementation more robust by
%leveraging centralized management entities already exist in the
%cloud. In particular, when tenants detect negative behaviors from
%other tenants, they not only lower these tenants' reputation scores,
%but also inform the Reputation Manager (RM) of these behaviors. This
%information helps the RM to get a wide view of the tenants sharing its
%datacenter resources and their overall behaviors. Since the RM helps
%introduce tenants to each other and can communicate with all tenants,
%maintaining a global reputation at the RM can allow for faster
%isolation of bad tenants, e.g., blocking their traffic.

\subsubsection{Configurable Query Interface}

In centralizing the IBR mechanism, we can provide a highly 
configurable interface to query the reputation.
Here, we elaborate on both the initial query configuration
as well as for subsequent queries.

\nip{Initial Query.} A reputation query can be as simple as
calculating the shortest path between two nodes, creating a new edge
between the nodes, and then only updating the direct edge between the
two nodes upon reputation feedback (direct feedback).  In the
reputation mechanism we are using, the feedback also impacts the edges
of the initial path between the two nodes (indirect feedback).

This adds an interesting aspect to the initial query: should an
intermediate node allow the search to include certain outgoing edges
in the query or not (or, in IBR terms, whether a node should make an
introduction).  To support each user's flexibility, Seit provides the
ability to configure two aspects:

%1. intro sensitivity (intro anybody, or be highly selective)
\begin{itemize}

\item {\bf Outgoing Edge Selectivity:}  The idea behind introduction-based
routing is analogous to real life: if a person I trust introduces
someone, I am likely to trust that person more than if they were a
random stranger.  If I have a good interaction with that person, it
generally strengthens my trust in the person that made the
introduction.  A bad interaction can have the opposite effect.  As
such, people are generally selective in introductions, especially in
relationships where there is a great deal of good will built up.
In Seit, the tenant has full control over the thresholds for which an
introduction is made. They can introduce everyone with a low
threshold; they can also be selective with a very high threshold.
%Of course, one could choose to not introduce
%anyone, but then this will limit the positive reputation they can
%obtain.  Likewise, the cloud choose to introduce anyone (and our
%Reputation Manager does just that, in order to bootstrap the system),
%but then the reputation will be limited.  Intuitively, a medium value
%as the threshold makes sense -- \eg 0.5 on a scale of 1.0.  In the
%future we would like to investigate whether a probability distribution
%would be more effective than a hard threshold.

%2. if you get 100 queries that all go through you in a short period, you'd
% accept them all, but what if they all turn out bad -- your rep suffered.
% if you limited the rate that could go through you, there's time to 
% let reputation adjust for each, and then maybe you don't introduce someone next time
% and your reputation is hurt less.
\item {\bf Query Rate Limit:} A consideration in serving as intermediate
nodes (making introductions) is the magnitude of the potential impact
to one's reputation.  For this, Seit includes the ability to limit the
rate at which a given node serves as an intermediate node.  In doing
so, it allows the system to adapt the reputations based on these new
interactions, so that future requests to serve as an intermediate node
will have that additional information.  As an extreme example, say
there is no rate limiting, Tenant $A$'s reputation is above the
threshold for Tenant $B$ to introduce it to Tenant $C$ through $Z$.  Tenant
$A$ then attacks $C$ through $Z$, and $B$'s reputation suffers accordingly.
Instead, if $B$ was rate limited, $A$ could only connect to $C$, and need to
build up good reputation with $C$, or wait sufficient time, to be able
to connect to the other tenants.  
%\fixme{This limited the impact to $A$'s reputation.}

\end{itemize}

\nip{Subsequent (implicit) Queries.} In Seit, we support the view that
any interaction should reflect the current reputation and that it is
the current reputation that should be used when handling any
interaction.  In other words, a reputation query should be performed
continuously for all ongoing interactions.  This, of course, would be
highly impractical.

Instead, Seit integrates triggers within the reputation manager.
Reputation changes whenever feedback is received---positive or
negative.  Within Seit, the effected paths through the graph that are
affected by any single edge update is tracked.  Then, upon an update of
an edge due to feedback being received, Seit will examine a list of
thresholds in order to notify a tenant when a threshold has been
crossed (which ultimately are sent to each component).

These threshold lists come from the shims within each tenant's
infrastructure.  The shims are what the tenant configures (or leaves
the defaults) to specify actions to take based upon different
reputation values.  When a shim is initialized, or the configuration
changes, the shim will notify the Seit reputation manager of these
values.

%%  ISSUE: Seit rep manager doesn't know a specific connection from, say, tenant B, does or does not gothrough
%%         component X (e.g. firewall) in tenant A, it just knows that tenant B's rep changed and
%%         some component X had a threshold that was crossed -- does the component just ignore notifications not
%%         relevant to it?

\section{Implementation}
\label{sec:impl}

We have built a prototype of the Seit reputation manager, and
integrated it with several cloud components.  We discuss these here.

\begin{table*}[t]
\small
\centering
\newcommand\T{\rule{0pt}{2.6ex}}
\newcommand\B{\rule[-1.2ex]{0pt}{0pt}}
\begin{tabular}{|m{1.2in}|m{1in}|m{4in}|}
%\begin{tabular}{|p{1.2in}|p{1in}|p{4in}|}
\hline
\rowcolor[gray]{0.9} {\bf Category} & {\bf System} &  {\bf Description} \T \B \\
\hline
IaaS SDN Controller & Floodlight~\cite{floodlight} & The shim maps the reputation to OpenFlow rules via the Floodlight REST API to block or direct traffic. \\
\hline
PaaS Broker & CloudFoundry~\cite{cloudfoundry} & The shim interfaces between the CloudFoundry broker and the CloudFoundry  command line interface (used by the users) to filter and sort the marketplace results based on their reputation. \\
\hline
Load Balancer & HAProxy~\cite{haproxy} & This shim alters the configurations written in a haproxy.cfg file to specify load balancing based on the reputation (directing tenants to servers based on reputation).  Upon every change, the shim will tell HAProxy to reload the configuration. \\
\hline
Infrastructure Monitoring & Nagios~\cite{nagios} & We took advantage of JNRPE (Java Nagios Remote Plugin Executor)~\cite{jnrpe} to build a Java plugin that is listening for any reputations sent by Seit's reputation manager, and displays this sentiment and configures alerts for when sentiment (collective reputation of the tenant running Nagios) drops. \\
\hline
Intrusion Detection System & Snort~\cite{snort} & Snort alerts are configured to log to Syslog.  By using SWATCH~\cite{swatch} to monitor Syslog, the Seit shim is alerted to all Snort alerts.  The shim parses the alerts and extracts information such as source IP and alert type and send the feedback to the reputation manager. \\
\hline
\end{tabular}
\vspace{-0.1in}
\caption{Implemented shims}
\label{tab:plugins}
\end{table*}

We prototyped the reputation manager in approximately 7300 lines of
Java code.  We implemented the reputation manager as a scalable
Java-based server that uses Java NIO to efficiently handle a large
number of tenant connections. We also provided an admin API to setup,
install, and view policies across the cloud, as well as facilitate new
tenants and their services.
%While the reputation manager can be installed on any tenant or cloud server, we envision it as a centralized controller running as a service by the cloud provider to optimize the network (likely coupled with a hot-swappable backup server).

Rather than the reputation manager interfacing with each component,
within each tenant, we built a per-tenant server to serve as a proxy
between the tenant's components and the reputation manager. This proxy
 is a light weight Java process that can be installed on any
tenant machine. It listens on two separate interfaces for internal and
external communications. The internal interface is used to communicate
with a tenant's own components, while the
external interface is used to communicate with the reputation
manager. The proxy can be configured with a text configuration
file that specifies the following: (i) a list of components, each of
which has the name of the component, IP, type (service, executor,
sensor), component description and its tasks; (ii) the edge
selectivity threshold to specify when to refuse or accept connections
or introduction requests from a tenant; and (iii) the query rate
limit.

%All communication in Seit is performed through a common messaging
%interface. This API includes 1) a registration request from
%sensors/executors to the RM when they boot up for the first time, 2)
%a connection request from a tenant's component, asking for a
%component (service) in another tenant, which is forwarded from the
%tenant's SE to the RM, 3) a feedback message containing the
%reputation score, and 4) a configuration message setting new
%thresholds or reputation scores.

All communication in Seit is performed through a common messaging
interface. The API includes (1) a registration request from the
components when they boot up for the first time, (2) a connection
request when one tenant requires communication with another tenant,
which includes both the initial request and a message to approve (sent
to both source and destination) or reject the request (sent only to
the source), (3) a feedback message containing the components desire to
positively or negatively impact the reputation score for a given
connection (impacting both the reputation of the other tenant, but
also the introducers responsible), and (4) a configuration message
setting new thresholds and query configurations.

We built a shim interface for a number of cloud components, one for
each example discussed in Section~\ref{sec:interfaces}.
Table~\ref{tab:plugins} summarizes these components.

%% Everything needs to be in 1 page.
\section{Analysis of Seit's Isolation and Stability}
\label{sec:properties}

Using IBR allows Seit to both separate misbehaving tenants from well-behaved tenants and maintain stable reputations at each tenant. %Thus, Seit is able to isolate misbehaving tenants, preserving the overall levels of service in the system. 
In this section, we formally show that these properties hold.
% In this section, we investigate the dynamics of tenant reputations in SEIT. We consider the reputations from two perspectives: first, we show that tenant reputations with IBR ``accurately'' reflect tenant behavior, in the sense that temporary anomalies in behavior do not result in permanent damage to tenant reputations. Indeed, compared to a simple maintenance of local reputations, IBR allows tenants to recover faster. We then show that IBR segregates ``good'' and ``bad'' tenants at the equilibrium, creating an incentive for tenants to provide good service to each other.

% We wish to prove two basic properties of SEIT: first, that SEIT helps increase the speed with which misbehaving tenants are isolated; and second, that tenants' reputations under SEIT are stable. To benchmark these properties for SEIT, we will consider two types of reputation-based systems: one with and one without introductions.

We consider an IBR system with $N$ tenants, each of whom desires services from other tenants and can provide some services in return.\footnote{Different tenants may provide different services; our analysis is agnostic to the type of the service.} We consider a series of discrete timeslots $t = 0,1,2,\ldots$ and use $q_{ij}[t] \in \left[-1,1\right]$ to denote tenant $i$'s feedback on the services provided by tenant $j$ to $i$ at time $t$. This feedback may include both the received fraction of tenant $i$'s requested service (e.g., 3GB out of a requested 5GB of SQL storage) as well as whether the service was useful (e.g., sending malicious packets). We let $R_{ij}[t] \in [0,1]$ denote tenant $j$'s reputation score at tenant $i$ during timeslot $t$. 
% IBR adjusts the reputation scores $R_{ij}[t]$ according to three factors: first, the services directly provided by tenant $j$ to tenant $i$; second, the services provided by tenants that tenant $j$ introduced to tenant $i$, or by tenants introduced by these tenants; and third, tenant $j$'s services provided to other tenants introduced to $j$ by $i$ (Section \ref{sec:ibr}). 

% Note that $q_{ij}[t] = 0$ if tenant $j$ does not interact with tenant $i$, e.g., if the two tenants have not been introduced. % We also introduce the notation $x_{ij}[t]$ to denote the fraction of tenant $i$'s demand that is satisfied by tenant $j$ at time $t$.

Tenants update their reputation scores in every timeslot according to feedback from the previous timeslot, as described in Section \ref{sec:ibr}. 
%We denote these dynamics by
%\begin{equation}
%%R_{ij}[t + 1] = F_i\left(R_{ij}[t], Q[t]\right),
%R_{ij}[t + 1] = F_i\left(R_{ij}[t], q_{ij}^{\rm ibr}[t]\right),
%\label{eq:reputation}
%\end{equation}
%%where $F_i$ is an updating function chosen by each tenant $i$ and $Q[t]$ denotes the matrix with entries $q_{ij}[t]$. 
%where $F_i$ is an updating function chosen by each tenant $i$ and 
We suppose that these updates are linear in the feedback received, and define $q_{ij}^{\rm ibr}[t]$ as a weighted average of the feedback $q_{lk}[t]$ provided by all tenant pairs that contribute to $j$'s reputation at tenant $i$ (e.g., including tenants that $j$ introduced to $i$).
%For instance, if tenants only maintain local reputations, (\ref{eq:reputation}) could be written as $R_{ij}[t + 1] = F_i\left(R_{ij}[t], q_{ij}[t]\right)$: reputation updates for tenant $j$ at tenant $i$ depend only on the service provided by tenant $j$ to tenant $i$. In contrast, tenants in %IBR adjust their reputation scores $R_{ij}[t]$ according to three factors: first, the services directly provided by tenant $j$ to tenant $i$; second, the services provided by tenants that tenant $j$ introduced to tenant $i$, or by tenants introduced by these tenants; and third, tenant $j$'s services provided to other tenants introduced to $j$ by $i$ (Section \ref{sec:ibr}). 
%We assume that there is no delay in sending tenant feedback, e.g., the CM can receive and forward feedback to the appropriate tenants within one timeslot.
%
The reputation dynamics then follow
\begin{equation}
R_{ij}[t + 1] = \max\left\{(1 - \alpha)R_{ij}[t] + \alpha q_{ij}^{\rm ibr}[t], 0\right\}
\label{eq:ibr}
\end{equation}
where $\alpha\in (0,1)$ is a parameter chosen by tenant $i$. A larger $\alpha$ allows the reputations to evolve more quickly.

\nip{Isolation of Misbehaving Tenants.} % Prop. \ref{prop:finite} characterizes the reputation behavior with misbehaving tenants. 
%Lemma \ref{lem:constant} and Prop. \ref{prop:finite} both characterize the reputation behavior in idealized scenarios. 
In typical scenarios, tenants will likely act based on their reputation scores of other tenants: for instance, tenant $i$ would likely provide better service to tenants with higher reputation scores. We can approximate this behavior by supposing that the service provided by tenant $j$ to tenant $i$ (and thus $i$'s feedback $q_{ij}[t]$ on $j$'s service) is proportional to $i$'s reputation score at $j$: $q_{ij}[t] = \pm R_{ji}[t]$, where the sign of $q_{ij}[t]$ is fixed and determined by whether tenant $j$ is a ``good'' or ``bad'' tenant. The reputation dynamics (\ref{eq:ibr}) are then linear in the $R_{ij}$, allowing us to determine the equilibrium reputations:
\begin{prop}[Equilibrium Reputations]
\label{prop:equilibria}
Equilibria of (\ref{eq:ibr}) occur when, for each pair of tenants $(i,j)$, either $q_{ij}^{\rm ibr}[t] = R_{ij}[t]$ or $R_{ij}[t] = 0$ and $q_{ij}^{\rm ibr}[t] < 0$. These equilibria are Lyapunov-stable, and the system converges to this equilibrium. % Equilibria of (\ref{eq:linear}) occur when either $R_{ij}[t] = 0$ and $q_{ij}[t] \leq 0$ or $R_{ij}[t] = q^{\rm ibr}_{ij}[t]$. All of these equilibria are Lyapunov-stable.
\end{prop}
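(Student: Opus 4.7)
The plan is to split the proposition into three sub-claims: (i) characterize the set of equilibria, (ii) verify Lyapunov stability at each equilibrium, and (iii) establish convergence of the trajectory to that equilibrium. I would handle them in this order because the equilibrium characterization is almost purely algebraic, whereas the stability and convergence arguments both rely on first expressing the dynamics in a convenient linear-plus-projection form.

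For the equilibrium characterization I would simply set $R_{ij}[t+1] = R_{ij}[t] = R_{ij}^\star$ in (\ref{eq:ibr}) and split on which branch of the max is active. If $(1-\alpha)R_{ij}^\star + \alpha q_{ij}^{\rm ibr} \ge 0$, the max is inactive and the fixed-point equation collapses (after dividing by $\alpha>0$) to $R_{ij}^\star = q_{ij}^{\rm ibr}$. Otherwise the max pins $R_{ij}^\star = 0$, and consistency with the negative branch forces $\alpha q_{ij}^{\rm ibr} < 0$, i.e., $q_{ij}^{\rm ibr} < 0$. This is precisely the dichotomy in the statement.

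For stability I would exploit the assumption $q_{ij}[t] = \pm R_{ji}[t]$: since $q_{ij}^{\rm ibr}[t]$ is a linear (weighted-average) combination of the $q_{lk}[t]$, stacking the reputations into a vector $\vec R[t]$ gives a representation $\vec R[t+1] = \Pi_{\ge 0}\!\left[((1-\alpha)I + \alpha A)\,\vec R[t]\right]$, where $A$ is the signed averaging matrix induced by the IBR aggregation rule and $\Pi_{\ge 0}$ is the componentwise projection onto the non-negative orthant. Because the rows of $A$ are convex combinations of entries of $\vec R$ with coefficients of modulus at most one, $A$ has spectral radius at most one in the $\ell_\infty$ norm, and therefore $M := (1-\alpha)I + \alpha A$ is a (weak) contraction with $\|M\|_\infty \le 1$ and with $\alpha$ strictly shrinking directions not aligned with the equilibrium. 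I would then propose the Lyapunov candidate $V(\vec R) = \|\vec R - \vec R^\star\|_\infty$ and argue in two steps: (a) on the interior branch, the map is affine and $V$ is non-increasing because $M$ is $\ell_\infty$-non-expansive; (b) the projection $\Pi_{\ge 0}$ is $\ell_\infty$-non-expansive on the non-negative orthant, and in particular $\Pi_{\ge 0}$ can only decrease the distance to $\vec R^\star$ since $R_{ij}^\star \ge 0$. Combining (a) and (b) gives $V(\vec R[t+1]) \le V(\vec R[t])$, which is Lyapunov stability.

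Finally, for convergence, I would strengthen the Lyapunov inequality to a strict decrease whenever $\vec R[t] \ne \vec R^\star$: the contractive factor $(1-\alpha)$ acts on the component of $\vec R[t]-\vec R^\star$ not absorbed by $A$, and on the coordinates where the max is active the error collapses to zero in one step. Standard discrete-time Lyapunov arguments then give $\vec R[t] \to \vec R^\star$. The main obstacle I anticipate is rigorously handling the non-smoothness introduced by the $\max\{\cdot,0\}$ projection together with the mixed signs of $q_{ij}$ (good versus bad tenants): one must check that crossings of the boundary $R_{ij}=0$ do not create spurious oscillation, which I would address by noting that once a coordinate hits the boundary with $q_{ij}^{\rm ibr}<0$, it stays pinned at zero, so only finitely many branch-switches can occur before the dynamics become purely affine and the contraction argument finishes the proof.
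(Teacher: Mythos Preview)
Your proposal is essentially correct and parallels the paper's argument, but it is considerably more careful than what the paper actually does. The paper's proof simply drops the $\max\{\cdot,0\}$ projection altogether: it writes the dynamics as $R[t+1]=\Sigma R[t]$ with $\Sigma=(1-\alpha)I+\alpha\Sigma_1$, observes that the rows of $\Sigma_1$ sum to~$1$ (because $q_{ij}^{\rm ibr}$ is a weighted average), concludes that the spectral radius of $\Sigma$ is at most~$1$ for Lyapunov stability, and then argues convergence by the one-liner ``bounded linear systems that do not diverge must converge to their unique equilibrium.'' Your $\ell_\infty$-Lyapunov argument with $V(\vec R)=\|\vec R-\vec R^\star\|_\infty$ and the non-expansiveness of both $M=(1-\alpha)I+\alpha A$ and $\Pi_{\ge 0}$ is the operator-norm analogue of the paper's eigenvalue bound, so the stability portions are morally the same; what you add is an explicit treatment of the projection, which the paper simply elides.

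Two places where your write-up would need tightening if you want it to be more than a sketch: first, the claim that a coordinate with $q_{ij}^{\rm ibr}<0$ ``stays pinned at zero'' is not automatic, since $q_{ij}^{\rm ibr}[t]$ depends on the other (evolving) reputations and could in principle change sign; second, your strict-decrease step (``the contractive factor $(1-\alpha)$ acts on the component not absorbed by $A$'') is heuristic---$M$ is only non-expansive in $\ell_\infty$, not strictly contractive, so you would need an additional argument (e.g., a LaSalle-type invariance principle, or ruling out eigenvalues of modulus~$1$ other than~$1$ itself) to get convergence rather than mere stability. The paper's proof has the same gap but papers over it with the ``bounded linear systems converge'' assertion, which as stated would also admit persistent oscillation.
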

\begin{proof}
We can find the equilibria specified by solving (\ref{eq:ibr}) with $q_{ij}[t] = \pm R_{ji}[t]$. To see that the equilibria are Lyapunov-stable, we note that (\ref{eq:ibr}) can be written as $R[t + 1] = \Sigma R[t]$, where $R[t]$ is a vector of the $R_{ij}[t]$ and $\Sigma$ a constant matrix. It therefore suffices to show that $\Sigma$ has no eigenvalue larger than 1. We now write $\Sigma = (1 - \alpha)I_{2N} + \alpha\Sigma_1$, where each row of $\Sigma_1$ sums to 1 since $q_{ij}^{\rm ibr}$ is a weighted average. % Here the $\Sigma_k$ represent the sum terms of $q^{\rm ibr}_{ij}[t]$ in (\ref{eq:qibr}). We can show that each row of $\Sigma_1 + \Sigma_2 + \Sigma_3$ sums to at most 1 from the fact that $\beta_1 + \beta_2 + \beta_3 \leq 1$ and each summand of each term is at most 1. 
Thus, the maximum eigenvalue of $\alpha\Sigma_1$ is $\alpha$, and that of $\Sigma$ is $(1 - \alpha) + \alpha = 1$. Since linear systems either diverge or converge to an equilibrium and the $R_{ij}$ are bounded, the system must converge to this (unique) equilibrium.
\end{proof}
This result shows that equilibria are reached when tenants agree with each others' reputations: the overall feedback $q_{ij}^{\rm ibr}[t]$ that tenant $i$ receives from tenant $j$ is consistent with tenant $j$'s reputation at tenant $i$.
% in this case, each tenant $j$'s reputation at tenant $i$ accurately reflects not only the actual service received that tenant $i$ received, but also tenant $j$'s willingness to provide services for tenant $i$, as measured by $i$'s reputation at tenant $j$. Moreover, the system will naturally converge to this equilibrium, showing that IBR is stable.

%\subsection{Isolation of Misbehaving Tenants}\label{ssec:ibr-misbehav}

% Using IBR allows Seit to isolate misbehaving tenants, reducing their ability to adversely affect the network. For instance, if tenants consistently misbehave, they will quickly gain a bad reputation, c.f. Lemma \ref{lem:constant} and Prop. \ref{prop:equilibria}, and consequently receive worse service from other tenants. % \footnote{They may do this faster with IBR since there are many more sources of feedback, so IBR can more confidently decrease reputations. The rate of decrease depends on the weights $\alpha$, $\beta_1$ and $\beta_2$ in (\ref{eq:ibr}).}
We can interpret Prop. \ref{prop:equilibria}'s result as showing that at the equilibrium, tenants segregate into two different groups: one group of ``bad'' tenants who provide bad-quality service and have zero reputation, receiving no service; and one group of ``good'' tenants with positive reputations, who receive a positive amount of service.
% Similarly, Lemma \ref{lem:constant} shows that when the tenants provide constant amounts of services, the reputations segregate into two groups, one ``bad'' group with zero reputation and one ``good'' group with positive reputation. 
Thus, tenants may experience a desirable ``race to the top:'' good tenants will receive good service from other good tenants, incentivizing them to provide even better service to these tenants. Bad tenants experience an analogous ``race to the bottom.''

We illustrate these findings in Figure \ref{fig:reputations}, which simulates the behavior of 100 tenants, 10 of which are assumed to be malicious $\left(q_{ij}[t] = -1\right)$. Tenants' reputations are assumed to be specified as in (\ref{eq:ibr}) and are randomly initialized between 0 and 1, with $\alpha = 0.1$. The figure shows the average reputation over time of both bad and good tenants. We see that good tenants consistently maintain high reputations at good tenants, while bad tenants quickly gain bad reputations at all tenants. % Interestingly, the bad tenants' reputations decline more quickly at good tenants than at other bad tenants, likely because good tenants will initially provide better service, incentivizing attacks from the bad tenants. We also note that good tenants eventually gain zero reputations at bad tenants, since they provide less and less service to the bad tenants as the bad tenants' reputations decline.
% With IBR or other introduced-based protocols, this interpretation holds assuming that tenants are consistent in their behavior towards all other tenants. Thus, good tenants will increase their reputation scores at other tenants through introduction-based means as well as through directly providing services. In the next section, we examine the implications of this race to the top on misbehaving tenants, who are left out and eventually isolated. Formally, if we assume that 

\begin{figure}
\centering
\includegraphics[width = 0.45\textwidth]{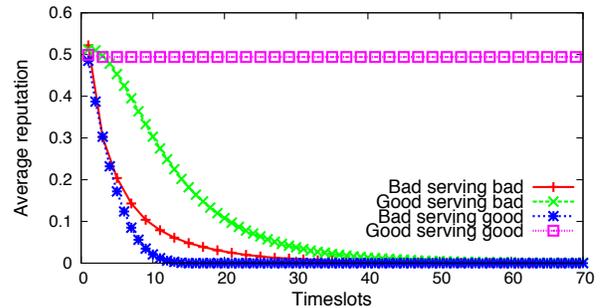}
\vspace{-0.05in}
\caption{Average reputations for ``good'' and ``bad'' tenants over time.}
\label{fig:reputations}
\vspace{-0.1in}
\end{figure}

\nip{Stability.}
While the analysis above considers binary ``bad'' and ``good'' tenants, some ``bad'' misbehaviors are not always malicious. For instance, tenants may occasionally send misconfigured packets by accident. Such tenants should be able to recover their reputations over time, instead of being immediately blacklisted at the client. Conversely, if malicious tenants occasionally send useful traffic in order to confuse their targets, they should not be able to improve their reputations permanently. We now show that this is the case:

%We now show that a one-time deviation from expected or usual behavior dissipates over time:
% Traditionally, stability of dynamical systems like (\ref{eq:ibr}) and (\ref{eq:linear}) has been defined as ``bounded input bounded output'' stability, i.e., that a bounded input to the system will produce a bounded output response. We would intuitively expect this property to hold for the reputation systems described above, and we can formally show this as follows:
\begin{prop}[Reputation Stability]
\label{prop:finite}
Let $\left\{R_{ij}[t],t \geq 0\right\}$ and $\left\{R'_{ij}[t],t \geq 0\right\}$ respectively denote the reputation scores given the feedback $q_{ij}^{\rm ibr}[t]$ and ${q_{ij}'}^{\rm ibr}[t]$. Suppose $q_{ij}^{\rm ibr}[0] \neq {q_{ij}'}^{\rm ibr}[0]$ and $q_{ij}[t] = q_{ij}'[t]$ for $t > 0$. Then $\lim_{t\rightarrow\infty} \left|R_{ij}[t] - R_{ij}'[t]\right| \rightarrow 0$. % Similarly, if $R_{ij}[t]$ and $R_{ij}'[t]$ evolve according to (\ref{eq:ibr}) instead of (\ref{eq:linear}), $\lim_{t\rightarrow\infty} \left|R_{ij}[t] - R_{ij}'[t]\right| \rightarrow 0$.
%
% {\color{red} this probably holds more generally if any divergence happens at $t = 0$--try to prove this and describe intuitively}
\end{prop}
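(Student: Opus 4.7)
The plan is to show that the update rule (\ref{eq:ibr}), applied to two trajectories driven by identical feedback after time $0$, is a strict contraction in the gap between the trajectories. The key observation will be that the clipping operation $x \mapsto \max\{x,0\}$ is $1$-Lipschitz, so a single update step is $(1-\alpha)$-Lipschitz in $R_{ij}$ whenever the feedback terms coincide.

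First I would define $\Delta_{ij}[t] := R_{ij}[t] - R'_{ij}[t]$ and use the $1$-Lipschitz property of the max to obtain
\begin{equation*}
|\Delta_{ij}[t+1]| \leq \left|(1-\alpha)\Delta_{ij}[t] + \alpha \bigl(q_{ij}^{\rm ibr}[t] - {q_{ij}'}^{\rm ibr}[t]\bigr)\right|.
\end{equation*}
Second, I would invoke the hypothesis that $q_{ij}[t] = q'_{ij}[t]$ for $t > 0$. Since $q_{ij}^{\rm ibr}[t]$ is a weighted average of the $q_{lk}[t]$'s over an IBR graph and aggregation scheme shared by the two scenarios, this forces $q_{ij}^{\rm ibr}[t] = {q_{ij}'}^{\rm ibr}[t]$ for every $t \geq 1$, and the bound collapses to the pure contraction $|\Delta_{ij}[t+1]| \leq (1-\alpha)|\Delta_{ij}[t]|$. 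Iterating gives $|\Delta_{ij}[t]| \leq (1-\alpha)^{t-1}|\Delta_{ij}[1]|$ for all $t \geq 1$, and because reputations are confined to $[0,1]$ the initial gap is bounded, so $|\Delta_{ij}[t]| \to 0$ geometrically whenever $\alpha \in (0,1)$.

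I expect the main obstacle to be the clean handling of the $\max$ nonlinearity: naively one might worry that whenever one trajectory has been clipped to zero and the other has not the difference fails to contract, but the $1$-Lipschitz property of clipping against a constant resolves this in a single line. A secondary subtlety is that the effective feedback $q_{ij}^{\rm ibr}$ couples tenants through the IBR introduction graph, but this coupling does not disturb the bound so long as the two scenarios share the same graph structure and aggregation weights, which is implicit in the hypothesis that only the initial-time feedbacks differ.
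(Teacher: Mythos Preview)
Your argument is correct and is exactly the intended one: the paper's own proof is the single sentence ``The proof follows directly from (\ref{eq:ibr}),'' and your contraction computation is precisely how one makes that sentence rigorous. Your explicit handling of the $\max\{\cdot,0\}$ clipping via its $1$-Lipschitz property is a clean detail the paper leaves implicit.
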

The proof follows directly from (\ref{eq:ibr}). If tenants misbehave temporarily ($q_{ij}^{\rm ibr}[0] < 0$ and ${q_{ij}'}^{\rm ibr}[0] > 0$ but $q_{ij}^{\rm ibr}[t] > 0$), the effect of this initial misbehavior on their reputations disappears over time.

% put some interpretation about above proposition. 

%\input{deployment}
\section{Evaluation}
\label{sec:evaluation}

In this section, we evaluate both the performance of Seit's
implementation through micro-benchmarks as well as the benefits of
using reputation in a number of contexts. We used three large Linux
servers to run the experiments: {\em Server~1:} 64GB RAM, 24 Intel CPUs
(2.4GHz each), and running the reputation manager; {\em Server~2:} 64GB RAM,
24 CPUs (2.4GHz each), and running Mininet. {\em Server~3:} 32GB RAM, 12
Intel CPUs (2.00GHz each), and running Floodlight.

%We can see the benefit of Seit from three aspects, security (by minimizing threats and damage to other tenants by eventually isolating bad behaving tenants), services provision (by providing certain service and access based on trust level), and resource saving(how much resources we save the cloud and tenants while providing a secure system. In this section, we are trying to evaluate Seit from these aspects through actual experiment on the system we built and theoretically through the analysis model we developed. 
%\murad{Do we use the term tenant or service?}
\subsection{Performance Overhead}

Despite its benefits, a reputation-based system does introduce some
overheads. In this subsection, we study the extent of these
overhead.

\nip{Query Throughput and Latency.} 
The reputation manager performs a query when a tenant wants 
to connect to another tenant.  We performed a micro benchmark 
where we varied the number of tenants in the network and calculated
both the throughput (number of queries per second) and latency (time
to calculate a single query on average) of our implementation.  Shown
in Figure~\ref{fig:throughput} is the throughput that the reputation manager
can handle for a given number of tenants.  To calculate throughput, we took a snapshot of 
the reputation graph from a simulated execution, and injected queries at a fixed rate.
The max throughput was the maximum query rate we were able to achieve such that the total
time to receive all responses was within a small threshold of the total time to send all queries 
(the response time is linearly related to the request time until overload, at which point
it becomes exponentially related).
Important to note is these results (i) reflects initial queries, 
which will not be frequent, and (ii) reflects a single instance. This can be 
mitigated if the reputation manager is designed as distributed component (and left as future work).
Shown in Figure~\ref{fig:latency} is the average latency of a single
query, on the order of milliseconds. This is similar (in order of magnitude)
to the overheads imposed, for example, by a typical SDN flow setup (and we
expect queries to be less frequent than flow setups).

%\begin{figure}[t]
%	\centering
%        \begin{subfigure}[b]{0.235\textwidth}
%                \centering
%                \includegraphics[width=\textwidth]{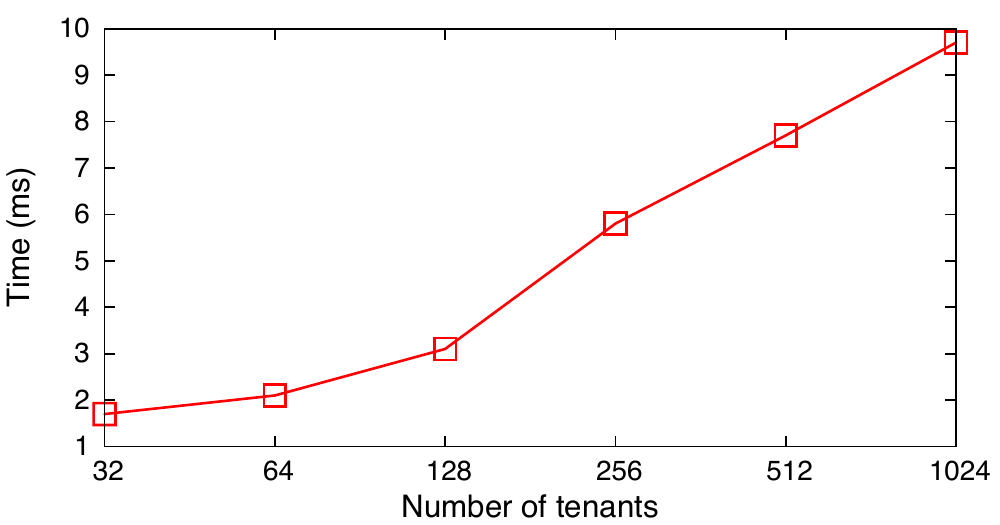}
%                \caption{Average path calculation time in Reputation Manager}
%                \label{fig:throughput}
%        \end{subfigure}
%        \begin{subfigure}[b]{0.235\textwidth}
%                \centering
%                \includegraphics[width=\textwidth]{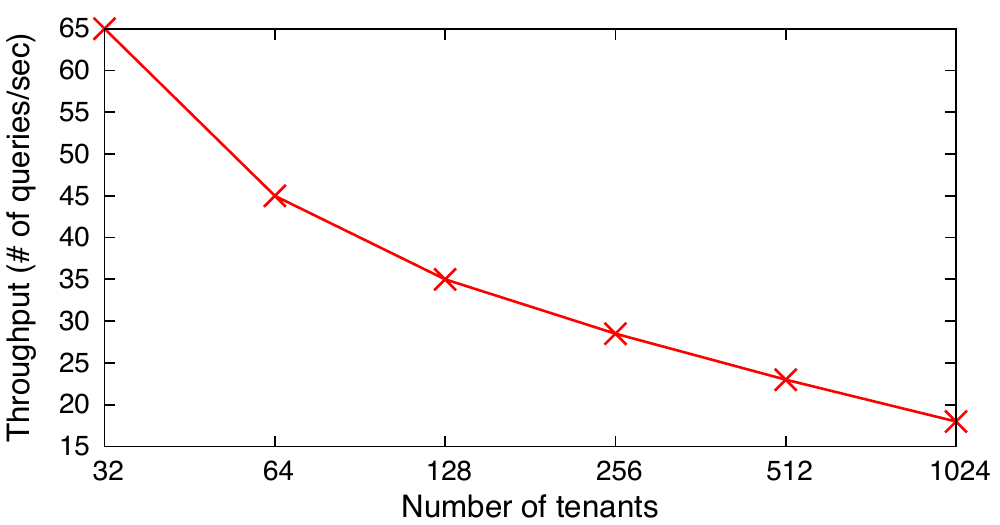}
%                \caption{Reputation Manager Throughput}
%                \label{fig:pathCalcTime}
%        \end{subfigure}
%         \caption{Reputation Manager benchmark}
%\label{fig:rm_benchmark}
%\end{figure}

\begin{figure}


\centering
\subcaptionbox{\label{fig:throughput}}{
  \includegraphics[width=0.45\textwidth]{graphs/throughput}
  }\par\medskip   
\vspace{-0.1in}    
\subcaptionbox{\label{fig:latency}}{
  \includegraphics[width=0.45\textwidth]{graphs/latency}
   }
 \vspace{-0.1in}    
\caption{Reputation manager benchmark}
\label{TS}
\vspace{-0.2in}
\end{figure}

\nip{Impact of Dynamic Behavior.} To see how much dynamic behavior impacts an
example component, we varied the frequency of reputation change
notifications sent to an HAProxy component.  In our setup with HAProxy
running in its own virtual machine, iperf~\cite{iperf} reported
HAProxy with a static configuration as being able to achieve a rate of
8.08 Gbps.  With Seit updating the reputation (meaning the reputation
changed to cross a threshold) at a rate of once every second only
reduced the throughput of HAProxy to 7.78 Gbps; at (an extreme)
rate of once every millisecond, it reduced the throughput to 5.58 Gbps.

\subsection{Seit Benefits}

%We can see the benefit of Seit from three aspects, security (by minimizing threats and damage to other tenants by eventually isolating bad behaving tenants), services provision (by providing certain service and access based on trust level), and resource saving(how much resources we save the cloud and tenants while providing a secure system. In this section, we are trying to evaluate Seit from these aspects through actual experiment on the system we built and theoretically through the analysis model we developed. 

The main motivation for using Seit is that it can improve a variety of
aspects of a cloud operation. Here, we evaluate the benefit of Seit in
three contexts chosen to show: (i) security improvements, (ii)
efficiency gains (cost savings), and (iii) revenue gains.

\subsection{Set up and Parameters}

We built an evaluation platform using Mininet~\cite{mininet} to
emulate a typical cloud environment. In each experiment, we run the Seit reputation
manager along with configuring a tenant setup specific to each
experiment.  This evaluation platform allows us to specify four key
parts of an experiment:

\begin{itemize}
\item \emph{Graph Construction:} How the graph is built (\ie how
  interconnections are made).

\vspace{-1ex}
\item \emph{Sensor Configuration:} What is the sensor (\ie what does
  the sensor detect in order to provide feedback)

\vspace{-1ex}
\item \emph{Reputation Use:} What can be controlled (\ie what
  component/configuration does reputation impact).

\vspace{-1ex}
\item \emph{Traffic Pattern:} What is the traffic pattern.
  Importantly, in each case rates are chosen based on the emulation
  platform limitations (absolute values of rates are not meaningful).
\end{itemize}

\subsection{Improved Security by Isolating Attackers}
\label{sec:eval:iso}

One benefit of Seit is that it provides the ability to cluster good
participants and effectively isolate bad participants.  Here, we show
Seit's effectiveness in thwarting a simulated denial of service (DoS) attack,
where an attacker overwhelms his victims with packets/requests in
order to exhaust the victim's resources (or in the case of elastically
scalable services, cause the victim to spend more money).  In our
evaluation, we are mimicking an attack that happened on Amazon
EC2~\cite{ec2dos1, ec2dos2}, where hackers exploited a bug in
Amazon EC2 API to gain access to other tenants accounts and then flood
other servers with UDP packets.

We considered three scenarios in each run. The first scenario is a
data center where tenants do not use any kind of reputation feedback
and each tenant independently makes a local decision to block or allow
communication. In the other two scenarios, tenants use Seit, and thus
report about any attacks they detect and use reputation to isolate bad
tenants.  The only difference between the two scenarios is the attack
pattern. In one, the attacker will attack its victims sequentially;
in the other, the attacker establishes connections with all of its
victims simultaneously and attacks them in parallel.

\begin{itemize}

\item \emph{Graph Construction:}
For the graph construction, we select the number of tenants for
the given run of the experiment.  For each tenant, we select the number of
other tenants it connects to randomly from 1 to 5.  For each tenant, we set their `tenant quality': 3\% are 
explicitly marked as attackers with tenant quality of 0, the rest are randomly assigned
a quality metric from 0.1 to 1.0.

\vspace{-1ex}
\item \emph{Sensor Configuration:}
While Seit can support a variety of sensors,
in this experiment we use a simplified model,
where traffic explicitly consists of good packets and 
bad packets, and a simple sensor that detects ``bad'' 
packets with probability 0.9.

\vspace{-1ex}
\item \emph{Reputation Use:} the IaaS network controller
(in our case, a Floodlight controller) will block a tenant 
from being able to send
traffic to another tenant when the reputation of
the sender drops below some value.

\vspace{-1ex}
\item \emph{Traffic Pattern:} Tenants generate traffic according to a
simplistic model of a fixed rate and fixed inter-packet gap, where the
probability of sending a good packet or bad packet is based on the
tenant quality configuration (\eg $q<0.05$, representing a malicious
tenant, always send bad packets, $0.05<=q<0.5$, representing a
careless tenant, send bad packets infrequently, and in proportion to
the tenant quality, and $0.5<=q$, always send good packets).
Attackers send traffic at a rate 10 times higher than other tenants (1
every 10ms vs 1 every 100ms). Each attacker sends a total 100 packets
to each target while the rest send 50 packets.  Attackers deviate from
the connection graph determined above, instead attempt connection
to 25\% of the other tenants randomly.

\end{itemize}

We measured the total number of attack packets generated by attackers
and the total number of these packets that reached the victims.  We
varied the total number of tenants in each run starting from 32
tenants up to 1024 tenants.

As shown in Figure \ref{fig:ddos}, without Seit, over 90\% of the
attack packets will reach the victims, overloading the security
middleboxes. With Seit, on the other hand, we are able to isolate the
attacker.  With more tenants in the cloud, we should be able to block more
attack traffic as there will be greater amount of information about
the attacker.  A byproduct that is not shown is
that, in this experiment, we are also decreasing the total overall
traffic on the network by blocking at the source.

\begin{figure}
\centering
\includegraphics[width =\columnwidth]{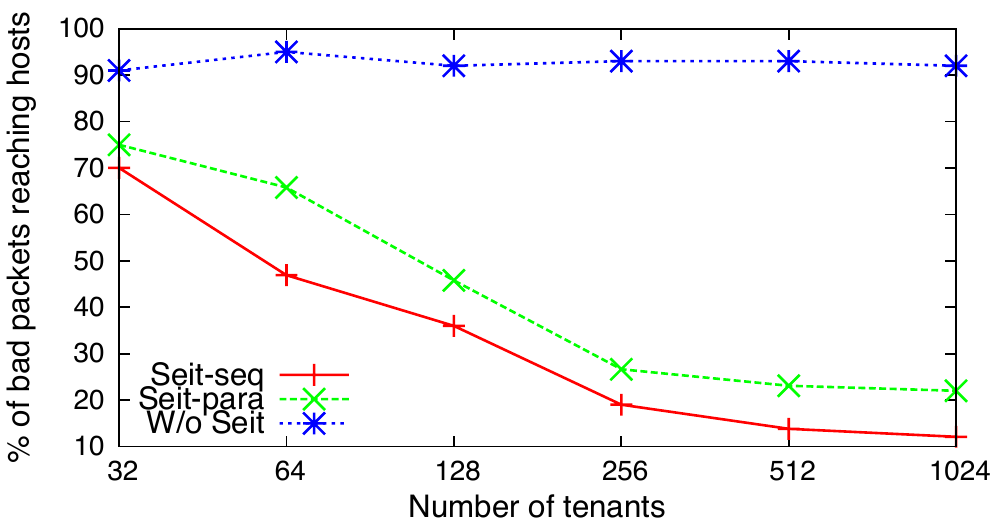}
\vspace{-0.1in}
\caption{Effectiveness of protecting tenants in an IaaS cloud}
\label{fig:ddos}
\vspace{-0.05in}
\end{figure}

\subsection{Decreased Costs by Managing Middlebox Chaining Policy}
Another benefit of being able to differentiate users with reputation
is that we can decrease the cost of operating security middleboxes,
without compromising security.  Here, we explore this benefit:

\begin{itemize}

\item \emph{Graph construction:}
The graph construction is identical to the experiment in Section~\ref{sec:eval:iso}, 
except that we fix the total number of tenants at 1024.

\vspace{-1ex}
\item \emph{Sensor configuration:}
The sensor configuration is identical to the experiment in Section~\ref{sec:eval:iso}.

\vspace{-1ex}
\item \emph{Reputation Use:} Each tenant will direct other tenants
connecting to it either through a middlebox or allow the tenant to
bypass the middlebox based on the reputation of the connecting
tenant. This represents a simple middlebox chaining policy.

\vspace{-1ex}
\item \emph{Traffic Pattern:} The traffic pattern is identical to the
experiment in Section~\ref{sec:eval:iso}.

\end{itemize}

We place the constraint that a single middlebox instance can only handle 10 packets per second.
This then allows us to capture the tradeoff between cost and security effectiveness
(in our experiments, measured as number of bad packets that ultimately reached an end host). 

We ran two variants of this experiment.  In one variant we allow the
number of middleboxes to scale to what is needed to match the traffic
in a given time interval.  In the other variant, we fix the budget to
a specific number of middleboxes, in which case, if the middleboxes
are overloaded, they will fail to process every packet.  In each case,
we calculate the total cost of operation (number of middlebox
instances needed) as well as the security effectiveness (percentage of
attack packets reached destination host).  As shown in
Figure~\ref{fig:resource_vs_security}, we can see that using Seit has
a distinct improvement in security when being held to a fixed budget,
and a distinct reduction in cost when shooting for a specific security
coverage to handle the varying load.

%\begin{figure}[t]
%	\centering
%        \begin{subfigure}[b]{0.235\textwidth}
%                \centering
%                \includegraphics[width=\textwidth]{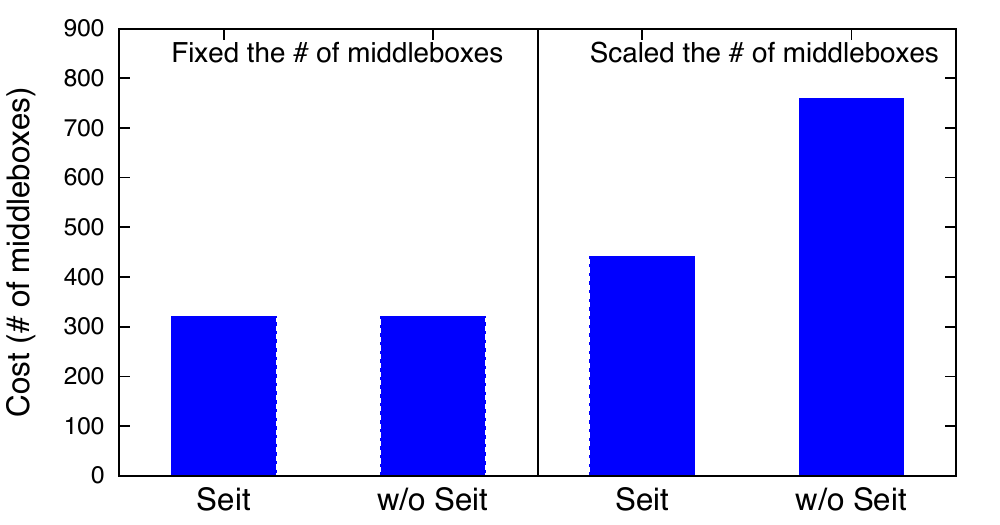}
%                \caption{Cost: number of middleboxes used}
%                \label{fig:middlebox_cost}
%        \end{subfigure}
%        \begin{subfigure}[b]{0.235\textwidth}
%                \centering
%                \includegraphics[width=\textwidth]{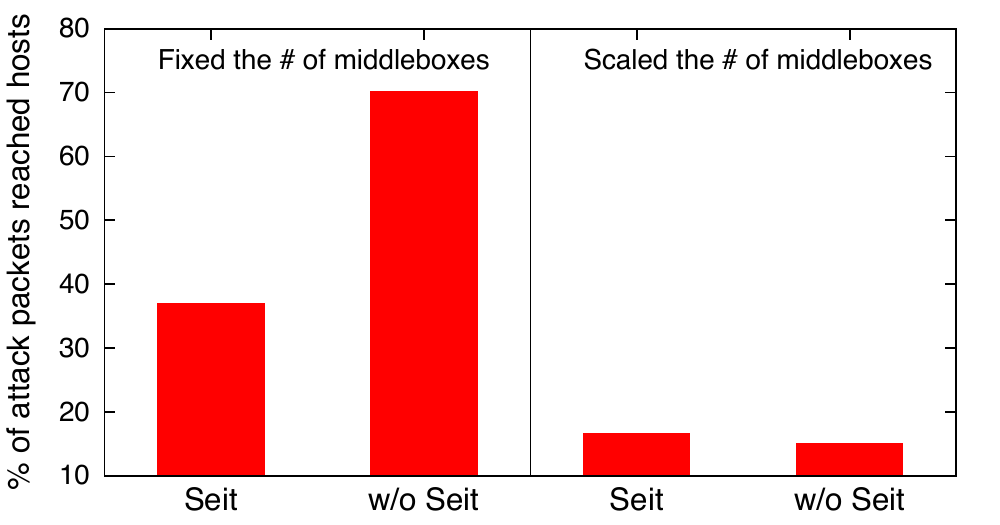}
%                \caption{Security \% of attack packets missed}
%                \label{fig:attack_missed}
%        \end{subfigure}
%         \caption{Resource saving and security}
%\label{fig:cost_security}
%\end{figure}
%

\begin{figure}
\centering
\subcaptionbox{\label{fig:middlebox_cost}}{
  \includegraphics[width=0.45\textwidth]{}
  }\par\medskip 
  \vspace{-0.1in}      
\subcaptionbox{\label{fig:attack_missed}}{
  \includegraphics[width=0.45\textwidth]{}
  }
\vspace{-0.1in}
\caption{Resource saving and security}
\label{fig:resource_vs_security}
\vspace{-0.1in}
\end{figure}

\subsection{Increasing Revenue by Managing PaaS Broker Search}

In PaaS clouds, such as CloudFoundry, service providers offering
similar service need a way to differentiate themselves.  With a
reputation system as provided with Seit, service providers 
that have the highest quality of service get rewarded with more
customers.  Using the Seit integration with the CloudFoundry broker
which sorts and filters search results, we evaluate the relationship
between quality of service and revenue.

\begin{itemize}

\item \emph{Graph Construction:} In this experiment, we have 1024 tenants,
where we selected 256 tenants as service providers, 256 tenants
as both service provider and service users, and the rest as service
users only. For simplicity, we assume all service providers are
providing identical services.  To distinguish between service
providers, we use four discrete tenant quality values 0.2, 0.4, 0.6,
and 0.8 (higher is better).  To bootstrap the experiment, we create an
initial graph where for each service user tenant, we randomly select the number
of service provider tenants it connects to (from 1 to 5).
%Once a tenant is connected to service provider, it will send a request packet and wait for response. We define revenue for service provider as the number of times it is selected after the search. The service providers who get more selection will get more revenue as they have more costumers to serve.

\vspace{-1ex} 
\item \emph{Sensor Configuration:} Here, clients make requests and receive
responses.  The sensor detects whether a request got a response or not;
Dropped requests are proxy for poor service.

\vspace{-1ex} 
\item \emph{Reputation Use:} For a PaaS broker, service tenant users perform
a search for services to use. In the broker, we filter and sort the
search results according to the service provider's reputation.  We
assume a client performs a new search every 20 seconds.  The service
user will choose among the top results with some probability
distribution (1st search result chosen 85\% of the time, 2nd result
10\% of the time, 3rd result 5\% of the time).

\vspace{-1ex} 
\item \emph{Traffic Pattern:} As we are attempting to show the incentive to a
service provider, we make a simplifying assumption that a service user
always sends a good request.  The service provider will either respond
or not based on the tenant quality for the given service provider.
Every tenant sends a packet every second to the service provider
connecting to. If a tenant receives a response, it will increase the
reputation of the service provider and update its neighbor with this
information. If no response received, the same process will happen but
with a negative impact.

\end{itemize}

We run the experiment for two minutes, and as a proxy for revenue, we
count the number of times a service user selects a given service
provider. As shown in Figure~\ref{fig:revenue}, the expected benefits hold. As
tenants with the greatest tenant quality (0.8) had a greater revenue
(showing over 2000 times they were selected), while tenants with
lowest tenant quality had the least revenue (being selected around 300
times).

\begin{figure}
\centering
\includegraphics[width =\columnwidth]{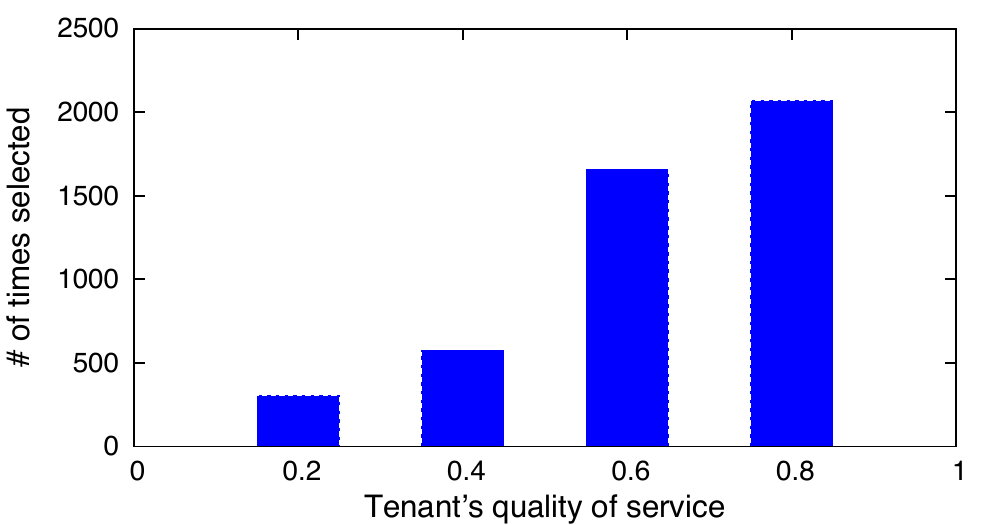}
\caption{Revenue increase for high quality service providers}
\label{fig:revenue}
\vspace{-0.1in}
\end{figure}

\section{Conclusions and Future Work}
\label{sec:conclusion}

Cloud systems today are fostering ecosystems of interacting services.
In this paper, we presented {\em Seit} as an inter-tenant framework
that manages the interactions within the cloud through the use of a
reputation-based system.  The Seit architecture overcomes key
challenges of using a reputation system in cloud environments around
integration, isolation, stability, and resiliency. Using practical
implementation, we demonstrate Seit's benefits across a wide spectrum
of cloud services.  As future work, we plan to integrate more
components and improve the overall performance of Seit. We also want
study the implications of incremental deployments, where some tenants do
not implement Seit.  Finally, we want to study scalability challenges when
managing a large number of components and tenants, especially across
autonomous geo-distributed clouds.

\vspace{0.1in}
%.  Further, we evaluated with the assumption that
%every tenant was using Seit.  As future work we will investigate the
%impact and design choices for incremental deployment.

%We will investigate more on the complexity and coordination aspect of creating shims to different components within  (How can a tenant spread his policies among his components without making them conflict with each other). In addition, we will study the aspect of tenants that don't use Seit and would like to communicate with tenants that are using Seit (How will that effect the stability of the system and what level and type of reputation should be assigned to non Seit tenants). Furthermore, we will look to deploy a reputation service within an existing cloud that tenants of that 
%cloud can take advantage of -- since we cannot access the cloud infrastructure,
%we will focus on the tenant specific scenarios. 

\footnotesize
\bibliographystyle{abbrv}
\balance
\bibliography{refs,ibm}

\end{document}